
\documentclass[10pt]{IEEEtran} 

\usepackage[dvips]{graphicx}  
\usepackage[latin1]{inputenc} 
\usepackage{amssymb}          
\usepackage{amsmath}          
\usepackage{amsfonts}
\usepackage[usenames,dvipsnames]{xcolor}
\usepackage{curves}
\usepackage{comment}

\newtheorem{cor}{Corollary}
\newtheorem{lemma}{Lemma}


\newtheorem{prop}{Proposition}
\newtheorem{defin}{Definition}
\newtheorem{thm}{Theorem}



\newcommand{\ket}[1]{|#1\rangle}
\newcommand{\bra}[1]{\langle #1|}
\newcommand{\braket}[2]{\langle #1|#2\rangle}

\newcommand{\Hi}{\mathcal{H}}

\newcommand{\Ei}{\mathcal{E}}
\newcommand{\trace}{\mathrm{{Tr}}}


\newcommand{\q}[1]{\vert #1 \rangle}
\newcommand{\qd}[1]{\langle #1 \vert}
\newcommand{\compl}[1]{\overline{#1}}
\newcommand{\Cons}{\mathcal{C}}
\newcommand{\Perm}{\mathfrak{P}}


\newcommand{\hi}{\mathcal{H}}
\newcommand{\C}{\mathbb{C}}
\newcommand{\PP}{\mathbb{P}}

\newcommand{\R}{\mathbb{R}}

\newcommand{\diag}{\textrm{diag}}

\newcommand{\qed}{\hfill $\Box$ \vskip 2ex}


 \newcommand{\beq}{\begin{equation}}
 \newcommand{\eeq}{\end{equation}}
 \newcommand{\beqa}{\begin{eqnarray}}
 \newcommand{\eeqa}{\end{eqnarray}}
 \newcommand{\beqan}{\begin{eqnarray*}}
 \newcommand{\eeqan}{\end{eqnarray*}}
 \newcommand{\bea}{\begin{eqnarray}}
 \newcommand{\eea}{\end{eqnarray}}


%
%
%


\title{Consensus for Quantum Networks:\\ From Symmetry to Gossip Iterations}

\author{Luca Mazzarella\thanks{L. Mazzarella is with Dipartimento di Ingegneria
dell'Informazione, Universit\`a di Padova, via Gradenigo 6/B,
35131 Padova, Italy ({\tt mazzarella@dei.unipd.it}).}, Alain Sarlette\thanks{A. Sarlette is with SYSTeMS, Ghent University, Technologiepark Zwijnaarde 914, 9052 Gent, Belgium ({\tt alain.sarlette@ugent.be}).}, Francesco Ticozzi\thanks{F. Ticozzi is with Dipartimento di Ingegneria
dell'Informazione, Universit\`a di Padova, via Gradenigo 6/B,
35131 Padova, Italy ({\tt ticozzi@dei.unipd.it}) and Dept. of Physics and Astronomy, Dartmouth College, 6127 Wilder, 03755 Hanover, NH (USA).} \thanks{Work partially supported by the University of Padua QFUTURE grant. }}

\date{\today}

\begin{document}
\maketitle

\begin{abstract}
This paper extends the consensus framework, widely studied in the literature on distributed computing and control algorithms, to networks of quantum systems. We define consensus situations on the basis of invariance and symmetry properties, finding four different probabilistic generalizations of classical consensus states. We then extend the gossip consensus algorithm to the quantum setting and prove its convergence properties, showing how it converges to symmetric states while preserving the expectation of permutation-invariant global observables.
\end{abstract}

\begin{IEEEkeywords}
Consensus, Quantum Information and Control, Markov Processes, Distributed Algorithms.
\end{IEEEkeywords}


\section{Introduction}\label{sec:intro}

Among the recent trends in control and systems theory, the field of distributed control, estimation and optimization on networks has stimulated an impressive amount of research, see e.g.~\cite{Gorinevsky2008,Bamieh2002a,D'Andrea2003,BulloBook,BarHesp1,VaraiyaEst82,BoydADMM,leonard2007collective,ZampieriGen}. A basic task for distributed information processing is reaching \emph{consensus} on some common control objective, shared value or slack variable. The present paper extends this well-studied consensus problem (see e.g.~\cite{TsitsiklisThesis,Jadbabaie,ConsensusReview,Moreau2005}) to networks of quantum systems, a special case of which would be classical probability distributions.

Exploring the links between information processing tasks and stochastic dynamics on networks has recently opened new research directions towards ``distributed'' quantum information applications. Among these, we recall quantum computation \cite{nielsen-chuang,kitaev-book} in its potential implementation via dissipative means \cite{verstraete2009}, and its connection to quantum random walks \cite{kempe-randomwalks,szegedy-randomwalks}. Other specific tasks include noise protection and dynamical error-correction \cite{viola-IPSlong,ticozzi-isometries,ticozzi-QDS,ahn-feedback,ticozzi-feedbackDD}, entanglement generation through stabilizing dissipative dynamics \cite{kraus-entanglement, ticozzi-QL}, the creation of quantum ``gadgets'' driven by dissipation \cite{wolf-gadgets}, as well as most tasks in the control of open quantum systems \cite{altafini-tutorial,dalessandro-book,james-passivity,gough-product,bolognani-arxiv}.

In this spirit, we here develop a framework for addressing quantum consensus problems: we identify and characterize a hierarchy of quantum consensus situations and study how these can be reached by  suitable dissipative quantum dynamics, while preserving some global information on the network. Our work offers not only a formal generalization of the well-known classical consensus problem, along with some intriguing insights on the latter, but also a potentially new viewpoint on a number of issues in quantum information. More specifically, it ties the structure of symmetric states and correlations \cite{werner-states,werner-symm} to their potential generation via quasi-local resources \cite{ticozzi-QL}, and to the relaxation of many-body systems to thermal states. While this work is mainly meant to provide solid theoretical foundation to quantum consensus problems, links between the proposed theory and potential applications will be outlined in the conclusions. As an interesting by-product, the present work also provides a viewpoint on consensus for classical systems in the context of probability distributions instead of deterministic states.

An attempt to lift the consensus problem to the quantum domain has been presented in \cite{SSRconsensus}.
It is based on a ``cone geometry'' approach, viewing quantum Kraus maps as the non-commutative generalization of Markov chains that model consensus algorithms. The authors show how Birkhoff's Theorem and Hilbert's projective metric lead to a general convergence result and contraction ratio estimation. However, by describing the dynamics of the whole system of interest as governed by a single Markov transition mechanism, this formulation does not account for subsystem structure or network connections in the quantum setting. It therefore defines consensus as asymptotic convergence to a scalar multiple of the identity: for quantum states this corresponds to a fully mixed, most uncertain state which is not the typical desired target for quantum information processing applications.

In the present paper, we approach quantum consensus from an ``operational'', multi-agent control perspective, starting from the basic classical ingredients: a network of subsystems, an interaction protocol with locality constraints, and a target consensus situation.
As a first step, Section \ref{sec:Definition} provides four possible ways to generalize the concept of a \emph{consensus state} to the quantum domain and explores their connections, establishing a hierarchy of quantum consensus definitions. Section \ref{sec:model} presents the quantum open-system dynamics and the locality notions that we employ to describe the interactions between the quantum ``agents''. Typical methods for interaction selection and timing are also introduced, in analogy with classical consensus. This provides all the ingredients needed to specify the general properties of an evolution that achieves \emph{quantum average consensus}. The symmetry-based reformulations at each step provide an alternative interpretation of classical consensus, and in Section \ref{sec:Algorithms} this leads us to a quantum generalization of the classical gossip algorithm. We prove its \emph{convergence to symmetric-state consensus} while preserving the expectation of any permutation-invariant observable. We further show how the algorithm can be explicitly seen as a generalization of the classical one and the classical convergence results can be used to prove a weaker convergence property. The section concludes with an example of the gossip algorithm working on a four-qubit network. In Section \ref{sec:conclusions} we summarize the main results and provide an outlook on possible developments and applications for our framework. A small tutorial on quantum systems modeling is given in Appendix \ref{app:notations}, along with an overview of the notations and conventions we use.


\section{Consensus States}\label{sec:Definition}

A classical \emph{consensus state} for a multipartite system is one in which the states of all the subsystems, often called agents, are the same (although not necessarily stationary). Since in general a quantum state for a multipartite system cannot be factorized into subsystem states (see Appendix \ref{miltipartite}), the definition of consensus must be carefully reconsidered. We here present an operational approach that leads to a hierarchy of definitions for what can be claimed to be ``quantum consensus''.


\subsection{Defining Classical Consensus}\label{subsec:def:classical}

The consensus problem for classical systems is most typically formulated along the following lines~\cite{TsitsiklisThesis,ConsensusReview}. Consider $m$ subsystems, each    one associated to a state given by a configuration variable $x_k \in \R^n$. These subsystems evolve through bilateral interactions, according to some networking scheme and dynamics (see Section \ref{sec:Algorithms}), and reach a \emph{consensus state} if they converge to the set $\Cons = \{\, (x_1,x_2,...,x_m) : x_j = x_k \, \forall \, j,k \,\}$. Furthermore, the agents are said to compute an \emph{average consensus} if for given initial states $(x_1(0),x_2(0),...,x_m(0))$ they converge to the particular equilibrium $(\bar{x},\bar{x},...,\bar{x}) \in \Cons$ where $\bar{x} = \tfrac{1}{m} \sum_i \, x_i(0)$.

Alternatively, consensus can be characterized as invariance w.r.t.~subsystem permutations. Let $\Perm$ denote the set of all subsystem permutation operators, i.e.~each $P_{\pi} \in \Perm$ is associated to some permutation $\pi$ of the integers $1,2,...m$ such that $P_{\pi}\, (x_1,x_2,...,x_m) = (x_{\pi(1)},x_{\pi(2)},...,x_{\pi(m)})$ for any $x_1,x_2,...,x_m$. Denoting the joint state of the $x_k \in \R^n$ by a vector $x \in \R^{mn}$, each $P_{\pi} \in \Perm$ can be written as an $mn \times mn$ matrix resulting from the Kronecker product of some $m \times m$ permutation matrix with the $n \times n$ identity matrix. Then we can define consensus as:
\begin{equation}\label{eq:defconsclass}
	\mathcal{C} = \{ x \in \R^{mn} :\;  P_{\pi}\, x = x \; \text{ for all } P_{\pi} \in \Perm \} \, .
\end{equation}
It is standard that checking $P_{\pi}\, x = x$ for all pairwise permutations $P_{\pi}$ is sufficient to guarantee $P_{\pi}\, x = x$ for all $P_{\pi} \in \Perm$. 

Average consensus is a fundamental element of distributed computation because for each linear function $Q : \R^{mn} \rightarrow \R^s$ which is invariant under all subsystem permutations, there exists $q : \R^n \rightarrow \R^s$ such that $Q x = q \bar{x}$. Thus such $Q$ can be evaluated locally by each agent once they have reached average consensus. The fact that all permutations can be obtained by concatenating pairwise permutations, already suggests that average consensus is computable with standard distributed algorithms.

 
\subsection{Quantum Consensus Definitions and their Relationships}\label{subsec:def:quantum}

Defining what a consensus situation ought to be in a quantum ``network'' is not a straightforward task. More than one definition may be appropriate depending on the type of {\em symmetry} we are seeking. Following the analogy with the classical case can help, but quantum measurement outcomes are intrinsically stochastic, so we must consider {\em probabilistic} consensus situations from the beginning. Let us explore different options by first discussing a simple case.\\

\noindent\textbf{Example 1: When is a quantum network in consensus?} Consider a multipartite quantum system composed of three qubits, with associated Hilbert space $\Hi=\C^2\otimes\C^2\otimes\C^2$, and three observables of the form $\sigma^{(1)}=\sigma_z\otimes I \otimes I,$ $\sigma^{(2)}=I\otimes \sigma_z \otimes I,$ $\sigma^{(3)}=I\otimes I \otimes \sigma_z,$ where the Pauli matrix $\sigma_z=\diag(1,-1).$\footnote{A qubit is a quantum system associated to a two-dimensional Hilbert space $\Hi\sim\C^2$; a standard basis for the latter is conventionally given by the vectors $\{\ket{0},\ket{1}\} \subset \Hi.$ The traceless unitary and hermitian Pauli operators $\sigma_x,\sigma_y,\sigma_z,$ completed with the identity operator, form an orthonormal basis for the operators on $\Hi.$ Explicitly, $\sigma_x=\q{1}\qd{0} + \q{0}\qd{1}, \sigma_y=i\q{1}\qd{0} -i\q{0}\qd{1}, \sigma_z=\q{0}\qd{0} - \q{1}\qd{1}$.} These correspond to observables of the quantity associated to $\sigma_z$ for each of the subsystems. It seems natural to say that the system is in consensus with respect to the expectation of $\sigma_z$ if 
\beq\trace(\rho\sigma^{(1)})=\trace(\rho\sigma^{(2)})=\trace(\rho\sigma^{(3)}).\label{exp}\eeq
The conditions for this to happen can be worked out explicitly in terms of the diagonal elements of the state $\rho$. In particular it is easy to check that all the following states satisfy condition \eqref{exp}:
\beqan
\rho^A &=& \frac{1}{8} I \otimes (\ket{0}+\ket{1})(\bra{0}+\bra{1})\otimes (\ket{0}+\ket{1})(\bra{0}+\bra{1});\\
\rho^B &=& \frac{1}{4} I \otimes (\ket{0,0}+\ket{1,1})(\bra{0,0}+\bra{1,1});\\
\rho^C &=& \frac{1}{{8}} I \otimes I \otimes I;\\
\rho^D &=& \frac{1}{{2}}(\ket{0,0,0}\bra{0,0,0}+\ket{1,1,1}\bra{1,1,1});\\
\rho^E &=& \ket{0,0,0}\bra{0,0,0};\\
\rho^F &=& \frac{1}{{2}}(\ket{0,0,0}+\ket{1,1,1})(\bra{0,0,0}+\bra{1,1,1})\, .
\eeqan
All these states, except $\rho^E$, have $\trace(\rho\sigma^{(i)}) = 0$ for $i=1,2,3$.

The above requirement can be strengthened by requesting \eqref{exp} to hold when $\sigma_z$ is replaced by \emph{any} observable $\sigma \in {\mathfrak B}(\C^2)$ in the definition of $\sigma^{(1)},\sigma^{(2)},\sigma^{(3)}$. This is equivalent to imposing that the reduced states for the three subsystems are the same (a formal proof is provided in the next subsection). It is then easy to check that $\rho^{B},\rho^{C},\rho^{D},\rho^{E},\rho^F$ satisfy this requirement, while $\rho^A$ does not. In fact the reduced states for $\rho^A$ are:
\[\rho^A_1=\frac{1}{2}I,\quad \rho^A_{2} = \rho^A_{3}=\frac{1}{2}(\ket{0}+\ket{1})(\bra{0}+\bra{1}).\]

In the light of \eqref{eq:defconsclass}, another potential definition of quantum consensus would require the {\em state to be symmetric}, i.e.~invariant under any permutations of the subsystems. This choice can be motivated by the classical case, where the consensus state is indeed permutation invariant. Among the states defined in Example 1, only $\rho^{C},\rho^{D},\rho^{E},\rho^F$ are permutation invariant.

Lastly, one might desire to have subsystem agreement not only on the observable averages, but also {\em on each measurement}. Namely, we want that each projective measurement of the (commuting and hence compatible) observables $\sigma_{1},\sigma_{2},\sigma_{3}$ gives \emph{perfectly correlated results} for the three subsystems. That is, among all possible measurement results $\{-1,+1\}^{\times 3}$, only $(-1,-1,-1)$ and $(+1,+1,+1)$ have a nonzero probability to occur\footnote{The set $\{c_1,c_2,c_3,...\}^{\times n}$ is the cartesian product of $\{c_1,c_2,c_3,...\}$  by itself $n$ times, i.e.~the set of $n$-tuples with components taken from $\{c_1,c_2,c_3,...\}$.}. For instance if the system is in an entangled state of the form $\rho = (\alpha\ket{0,0,0}+\beta\ket{1,1,1})(\alpha^*\bra{0,0,0}+\beta^*\bra{1,1,1}),$ with $|\alpha|^2+|\beta|^2=1$, the outcome of a measurement operation is $(-1,-1,-1)$ with probability $|\alpha|^2$ and $(+1,+1,+1)$ with probability $|\beta|^2$. With $\alpha=\beta=\frac{1}{\sqrt{2}}$ we get $\rho^F$ above. The states $\rho^A$, $\rho^B$ and $\rho^C$ do not satisfy this definition of consensus; indeed, for these three states, the distribution of measurement results for qubit 1 is independent of the measurements on the other two qubits, e.g.~$(+1,-1,-1)$ and $(-1,-1,-1)$ have the same nonzero probability so uncorrelated measurement results can happen.
Note however that also mixed states can lead to correlated results, if they express perfect classical correlations, as is the case for any state of the form $\rho^G=p\q{0,0\ldots 0}\qd{0,0\ldots 0}+(1-p)\q{1,1\ldots 1}\qd{1,1\ldots 1}$ with $0<p<1$. For three qubits with $p=\frac{1}{2}$ we get $\rho^D$ above.\hfill $\square$\\

Let us formalize the ideas emerging from the former example. Consider a multipartite system composed of $m$ isomorphic subsystems, labeled with indices $i=1,\ldots,m,$ with associated Hilbert space  $\Hi^m:=\Hi_1\otimes \dots \otimes \Hi_m\simeq \Hi^{\otimes m}$, with $\dim(\Hi_i)=\dim(\Hi)=n$ and $n\geqslant 2$. We shall refer to this multipartite system as to our {\em quantum network}. 
For any operator $X \in \mathfrak{B}(\mathcal{H})$, we will denote by $X^{\otimes m}$ the tensor product $X \otimes X \otimes ... \otimes X$ with $m$ factors. Given an operator $\sigma\in{\mathfrak B}(\Hi)$, we denote by $\sigma^{(i)}$ the local operator:
\[\sigma^{(i)}:=I^{\otimes (i-1)}\otimes \sigma \otimes I^{\otimes (m-i)}.\]
Permutations of quantum subsystems are expressed by a unitary operator $U_{\pi} \in {\mathfrak U}(\Hi)$, which is uniquely defined by 
\[U_{\pi} (X_1\otimes\ldots \otimes X_m) U_{\pi}^\dag= X_{\pi(1)}\otimes\ldots \otimes X_{\pi(m)}\]
for any operators $X_1,\ldots X_m$ in ${\mathfrak B}(\Hi)$, where $\pi$ is a permutation of the first $m$ integers. A state or observable is said to be \emph{permutation invariant} if it commutes with all the subsystem permutations. It is worth noting that given any observable $Q\in \mathfrak{H}(\hi^m)$, we can define a permutation invariant observable $X$ by considering:
\begin{equation}
\label{permutinvL}
X = \frac{1}{m!} \, \sum_{\pi \in \mathfrak{P}} U_{\pi}^{\dagger}QU_{\pi} \, .
\end{equation}

\begin{defin}[$\sigma$EC]\label{def:sigmaEC} Given $\sigma\in{\mathfrak B}(\Hi)$, a state $\rho\in{\mathfrak D}(\Hi^m)$ is in {\em $\sigma$-Expectation Consensus} ($\sigma$EC) if: 
\[\trace(\sigma^{(1)}\rho)=\ldots =\trace(\sigma^{(k)}\rho).\]
\end{defin}

\begin{defin}[RSC] A state $\rho\in{\mathfrak D}(\Hi^m)$ is in {\em Reduced State Consensus} (RSC) if, defining the reduced states $\bar\rho_k=\trace_{\left(\bigotimes_{j\neq k}\Hi_j\right)}(\rho)$, we have 
\[\bar\rho_1=\ldots =\bar\rho_m.\]
\end{defin}

\begin{defin}[SSC] A state $\rho\in{\mathfrak D}(\Hi^m)$ is in {\em Symmetric State Consensus} (SSC) if for each unitary permutation $U_{\pi}$ we have 
\[U_{\pi}\, \rho\, U_{\pi}^\dag = \rho\]
\end{defin}

\begin{defin}[$\sigma$SMC] Given an observable $\sigma$ with spectral decomposition $\sigma=\sum_j s_j \Pi_j \, \in{\mathfrak D}(\Hi)$, all $s_j$ different, a state $\rho\in{\mathfrak D}(\Hi^m)$ is in {\em Single $\sigma$-Measurement Consensus} ($\sigma$SMC) if:
\beq \trace(\Pi_j^{(k)}\Pi_j^{(\ell)}\rho)=\trace(\Pi_j^{(\ell)}\rho),\label{SMCeq}\eeq
for all $k,\ell\in\{1,\ldots,m\},$ and for each $j$.
\end{defin}
The definition of $\sigma$SMC requires that the outcomes of $\sigma$ measurements on different subsystems be exactly the same {\em for each trial}. Indeed, in this last definition, the right-hand side of \eqref{SMCeq} is the probability of obtaining  $s_j$ as a measurement result on both subsystem $\ell$ and $k$ (note that $\Pi_j^{(k)}$ and $\Pi_j^{(\ell)}$ commute, so this joint measurement $\Pi_j^{(k)}\Pi_j^{(\ell)}$ is well-defined). If those two probabilities are equal, then necessarily the probability of  $s_j$ on $k$ conditional to observing $s_j$ on $\ell$ is one (assuming that $\Pi_j^{(\ell)}\rho\Pi_j^{(\ell)} \neq 0$; that special case is trivial and can be treated separately).

One may wonder why, as we did for the consensus in expectation, we did not try to strengthen the $\sigma$SMC property, requiring perfect correlations of outcomes for {\em any} local measurements. In fact it can be shown that states satisfying this property do not exist.
\begin{prop}\label{prop:Nogo}
For any finite-dimensional quantum system, there does not exist a state $\rho$ that satisfies $\sigma$SMC for all $\sigma$.
\end{prop}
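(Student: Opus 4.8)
The plan is to derive a contradiction by picking a convenient family of observables and showing that $\sigma$SMC for all of them over-constrains $\rho$. First I would reduce to a single pair of subsystems. Since the $\sigma$SMC condition \eqref{SMCeq} for the pair $(1,2)$ involves $\rho$ only through the operators $\Pi_j^{(1)}\Pi_j^{(2)}$ and $\Pi_j^{(2)}$, both sides depend on $\rho$ solely via the two-body reduced state $\bar\rho_{12}=\trace_{3,\ldots,m}(\rho)$; explicitly, \eqref{SMCeq} reads $\trace((\Pi_j\otimes\Pi_j)\bar\rho_{12})=\trace((I\otimes\Pi_j)\bar\rho_{12})$. Hence if $\rho$ were in $\sigma$SMC for all $\sigma$, then $\bar\rho_{12}\in\mathfrak{D}(\Hi\otimes\Hi)$ would be in $\sigma$SMC for all $\sigma$ as a two-party state, and it suffices to rule this out for $m=2$. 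I would also restrict attention to \emph{nondegenerate} $\sigma$, whose spectral projectors $\Pi_j=\ket{\phi_j}\bra{\phi_j}$ are rank one with $\{\ket{\phi_j}\}$ an arbitrary orthonormal basis of $\Hi$; this is legitimate because to prove non-existence we are free to test only convenient observables.

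The second step turns $\sigma$SMC into a support condition on $\bar\rho_{12}$. Using $\sum_i\Pi_i=I$, the right-hand side expands as $\trace((I\otimes\Pi_j)\bar\rho_{12})=\sum_i \trace((\Pi_i\otimes\Pi_j)\bar\rho_{12})$, so that \eqref{SMCeq} becomes $\sum_{i\neq j}\trace((\Pi_i\otimes\Pi_j)\bar\rho_{12})=0$. Each summand equals $\bra{\phi_i\otimes\phi_j}\bar\rho_{12}\ket{\phi_i\otimes\phi_j}\geq 0$, so every term vanishes; and since $\bar\rho_{12}\geq 0$, the identity $\bra{v}\bar\rho_{12}\ket{v}=0$ forces $\bar\rho_{12}\ket{v}=0$. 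As the basis $\{\ket{\phi_j}\}$ is arbitrary and any orthogonal pair extends to such a basis, I conclude that $\bar\rho_{12}\ket{\phi\otimes\psi}=0$ for \emph{every} pair $\phi\perp\psi$ in $\Hi$.

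The final and central step is the linear-algebraic claim that $\Span\{\ket{\phi\otimes\psi}:\phi\perp\psi\}=\Hi\otimes\Hi$, which immediately yields $\bar\rho_{12}=0$, contradicting $\trace\bar\rho_{12}=1$. To prove it, fix an orthonormal basis $\{e_1,\ldots,e_n\}$ (recall $n\geq 2$): the off-diagonal vectors $e_i\otimes e_j$ with $i\neq j$ lie in the span trivially. For the diagonal directions, the orthogonal pair $(e_i+e_j,\,e_i-e_j)$ produces $e_i\otimes e_i-e_j\otimes e_j$ modulo off-diagonal terms, while the orthogonal pair $(e_i+ie_j,\,e_i-ie_j)$ produces $e_i\otimes e_i+e_j\otimes e_j$ modulo off-diagonal terms; adding and subtracting these recovers each $e_i\otimes e_i$, so the span is all of $\Hi\otimes\Hi$. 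I expect this spanning lemma to be the main obstacle: the vectors $\ket{\phi\otimes\psi}$ with $\phi\perp\psi$ are neither symmetric nor antisymmetric, so the diagonal directions $e_i\otimes e_i$ are not obviously reachable, and it is precisely the freedom to use complex coefficients in the orthogonal pairs that makes them accessible.
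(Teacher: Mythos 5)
Your proof is correct, and after the initial reduction it takes a genuinely different route from the paper's. Both arguments begin identically: since the $\sigma$SMC condition \eqref{SMCeq} for the pair $(1,2)$ depends on $\rho$ only through the two-body reduced state, it suffices to rule out the case $m=2$. From there the paper invokes Proposition \ref{propSYM} to rewrite $\sigma$SMC as $\Pi_{\rm sym}\rho=\rho$, tests only \emph{two} observables whose eigenbases are related by a discrete Fourier transform, and argues that $H=\Pi_{\rm sym}\Pi_{\rm sym}'\Pi_{\rm sym}$ has all eigenvalues strictly less than $1$, so $H\rho=\rho\neq 0$ is impossible; that spectral estimate is left as ``a few computations'' and moreover fails for $n=2$, forcing a separate ad hoc argument with $\sigma_x,\sigma_y,\sigma_z$ for qubits. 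You instead work directly with the definition: completing $I=\sum_i\Pi_i$ turns \eqref{SMCeq} into a vanishing sum of nonnegative terms, positivity upgrades $\qd{\phi\otimes\psi}\,\bar\rho_{12}\,\q{\phi\otimes\psi}=0$ to $\bar\rho_{12}\q{\phi\otimes\psi}=0$ for \emph{every} orthogonal pair $\phi\perp\psi$, and your spanning lemma --- using the pairs $(e_i\pm e_j)$ and $(e_i\pm ie_j)$ to reach the diagonal vectors $e_i\otimes e_i$ --- shows these product vectors span $\Hi\otimes\Hi$, forcing $\bar\rho_{12}=0$, contradicting unit trace. What your route buys: it is fully self-contained (no unproven eigenvalue computation), elementary, and uniform in the dimension $n\geq 2$, with no special case for qubits; your closing observation that complex coefficients are essential is also accurate, since over the reals the vector $\sum_i e_i\otimes e_i$ is orthogonal to every $\phi\otimes\psi$ with $\phi\perp\psi$. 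What the paper's route buys: it shows that $\sigma$SMC already fails for a \emph{finite} set of observables (two Fourier-conjugate ones for $n\geq 3$, three Pauli observables for $n=2$), a quantitatively sharper non-existence statement, whereas your argument genuinely uses the full continuum of bases in the hypothesis.
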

The proof makes use of Proposition \ref{propSYM} below, thus we postpone it to Appendix \ref{nosmc}.\hfill $\square$

{\em Remark:} It is worth remarking how all these definitions could be given for classical systems, in the context of consensus for random variables or for probability distributions of the state values. In this case, for example, $\sigma$EC would require the expectation of a set of random variables, each one associated to a subsystem, to be the same in all subsystems; RSC would require the marginal distributions on each subsystem to be equal; and SSC would require that the joint probability distribution is invariant with respect to subsystem permutations. In some sense the definition of $\sigma$SMC is the closest to the classical case, as it requires perfect agreement on the outcome of a set of random variables for {\em each measurement}.

All the states in our example satisfy $\sigma_z$EC, all but $\rho^A$ satisfy RSC, $\rho^C$ to $\rho^F$ satisfy SSC, and $\rho^D$ to $\rho^F$ satisfy $\sigma_z$SMC. There obviously seems to be a hierarchy in these definitions, and the following properties are meant to better characterize them. We shall use the notation ${\cal C}_{\rm XYZ}\subset {\mathfrak D}(\Hi)$ to indicate the set of states that satisfy the properties of the acronym XYZ, e.g.~${\cal C}_{\rm RSC}$ is the set of all Reduced State Consensus states. Note that, since all consensus definitions involve linear constraints, the consensus sets are all convex. 

\begin{thm}\label{hierarchy}
The following chain of implications holds:
\[\mbox{SSC} \implies \mbox{RSC} \implies \mbox{$\sigma$EC},\]
while the converse implications are not true in general.
\end{thm}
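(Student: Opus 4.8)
The plan is to prove the two implications by elementary partial-trace manipulations and to refute the converses using the states already built in Example~1. The one identity that drives everything is
\[
\trace(\sigma^{(k)}\rho) = \trace(\sigma\,\bar\rho_k),
\]
valid for every $\sigma\in\mathfrak{B}(\Hi)$ and every $k$: since $\sigma^{(k)}$ acts as $\sigma$ on factor $k$ and as the identity on all other factors, carrying out the partial trace over the complementary factors first collapses the left-hand side to a trace against the reduced state $\bar\rho_k$. Granting this, the implication RSC $\implies\sigma$EC is immediate: if $\bar\rho_1=\ldots=\bar\rho_m$, then the right-hand side $\trace(\sigma\,\bar\rho_k)$ does not depend on $k$, so $\trace(\sigma^{(1)}\rho)=\ldots=\trace(\sigma^{(m)}\rho)$ for the given $\sigma$ (and, in fact, simultaneously for every $\sigma$).

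For SSC $\implies$ RSC the key is to track how the partial trace interacts with a permutation. Using the defining relation $U_\pi (X_1\otimes\cdots\otimes X_m) U_\pi^\dagger = X_{\pi(1)}\otimes\cdots\otimes X_{\pi(m)}$ together with linearity, I would show that the $k$-th reduced state of $U_\pi\rho U_\pi^\dagger$ equals $\bar\rho_{\pi(k)}$, the $\pi(k)$-th reduced state of $\rho$. If $\rho$ is SSC then $U_\pi\rho U_\pi^\dagger=\rho$ for every $\pi$, whence $\bar\rho_k=\bar\rho_{\pi(k)}$; choosing $\pi$ to be the transposition of $k$ and $\ell$ yields $\bar\rho_k=\bar\rho_\ell$ for every pair, i.e.\ RSC.

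Finally, the converses fail already within Example~1. The state $\rho^A$ satisfies $\sigma_z$EC (in fact all six listed states do), yet its reduced states $\rho^A_1=\tfrac{1}{2}I$ and $\rho^A_2=\rho^A_3=\tfrac{1}{2}(\ket{0}+\ket{1})(\bra{0}+\bra{1})$ are not all equal, so $\sigma$EC $\not\Rightarrow$ RSC; and $\rho^B$ has all three reduced states equal to $\tfrac{1}{2}I$, hence is RSC, while it is manifestly not permutation invariant (site $1$ is uncorrelated and maximally mixed, whereas sites $2,3$ are entangled), so RSC $\not\Rightarrow$ SSC. The only mildly delicate point in the whole argument is the bookkeeping in the SSC step, namely correctly matching the $k$-th reduced state of $U_\pi\rho U_\pi^\dagger$ with $\bar\rho_{\pi(k)}$ under the paper's tensor-factor convention for $U_\pi$; once this index accounting is settled, both implications follow immediately, and the counterexamples require no computation beyond the reduced states already recorded above.
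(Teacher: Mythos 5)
Your proof is correct and follows essentially the same route as the paper's: the SSC $\implies$ RSC step via pairwise swaps acting through the partial trace, the RSC $\implies$ $\sigma$EC step by unwinding the definition (you merely make the identity $\trace(\sigma^{(k)}\rho)=\trace(\sigma\,\bar\rho_k)$ explicit where the paper calls it immediate), and the identical counterexamples $\rho^A$ and $\rho^B$ from Example~1 for the failed converses. No gaps; the index bookkeeping you flag as delicate is handled correctly.
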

\proof {\em SSC $\implies$ RSC}: If $U_{\pi}\rho U_{\pi}^\dag = \rho$ for each permutation, consider in particular $U_{(\ell,k)}$ that swaps subsystems $\ell$ and $k$. Then
\[\bar\rho_k=\trace_{\bigotimes_{j\neq k}\Hi_j}(\rho)=\trace_{\bigotimes_{j\neq k}\Hi_j}(U_{(\ell,k)}\rho U_{(\ell,k)}^\dag)=\bar\rho_\ell,\]
and the reasoning can be repeated for any pair.
{\em RSC $\implies$ OSC} is immediate by definition.
States $\rho^B$ and $\rho^A$ from Example 1 provide counterexamples for the converse of the first and of the second implication, respectively.
\qed

The converse result for the relation between RSC and $\sigma$EC goes as follows.
\begin{prop}\label{prop-rsc} A state is RSC if and only if it is $\sigma$EC for all $\sigma\in{\mathfrak B}(\Hi).$ 
\end{prop}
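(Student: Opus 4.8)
The plan is to translate both consensus notions into statements about the reduced states $\bar\rho_k$ using the defining property of the partial trace. The key identity is that for any $\sigma\in\mathfrak{B}(\Hi)$ and any $k$,
\[
\trace(\sigma^{(k)}\rho)=\trace(\sigma\,\bar\rho_k),
\]
which follows directly from the definition $\bar\rho_k=\trace_{\bigotimes_{j\neq k}\Hi_j}(\rho)$ together with the factorization $\sigma^{(k)}=I^{\otimes(k-1)}\otimes\sigma\otimes I^{\otimes(m-k)}$: tracing out all factors except the $k$-th one first leaves exactly $\trace(\sigma\,\bar\rho_k)$. With this identity in hand, $\sigma$EC for a fixed $\sigma$ is precisely the statement $\trace(\sigma\,\bar\rho_1)=\cdots=\trace(\sigma\,\bar\rho_m)$, so the whole proposition becomes a comparison of the Hermitian operators $\bar\rho_1,\dots,\bar\rho_m$.

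First I would dispatch the forward implication, which is essentially the content already used in Theorem \ref{hierarchy}: if $\rho$ is RSC then $\bar\rho_1=\cdots=\bar\rho_m$, so $\trace(\sigma\,\bar\rho_k)$ is manifestly independent of $k$ for every $\sigma$, giving $\sigma$EC for all $\sigma$. The substantive direction is the converse. Assuming $\sigma$EC holds for all $\sigma\in\mathfrak{B}(\Hi)$, I fix any pair $k,\ell$ and obtain $\trace\!\big(\sigma(\bar\rho_k-\bar\rho_\ell)\big)=0$ for every $\sigma$. The crux is to deduce $\bar\rho_k-\bar\rho_\ell=0$ from this, which is exactly non-degeneracy of the Hilbert--Schmidt (trace) inner product on $\mathfrak{B}(\Hi)$: since $\bar\rho_k-\bar\rho_\ell$ is Hermitian it suffices to test against Hermitian $\sigma$, and the choice $\sigma=\bar\rho_k-\bar\rho_\ell$ yields $\trace\!\big((\bar\rho_k-\bar\rho_\ell)^2\big)=0$, forcing $\bar\rho_k=\bar\rho_\ell$. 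Repeating over all pairs establishes RSC.

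I expect no serious obstacle: the only points requiring care are the clean derivation of the partial-trace identity and the observation that restricting $\sigma$ to Hermitian operators is harmless because both reduced states are Hermitian (so their difference lies in the real subspace on which the trace form is an inner product). It is worth emphasizing that the converse genuinely relies on ranging over all of $\mathfrak{B}(\Hi)$, equivalently over a full Hermitian operator basis of $\mathfrak{B}(\Hi)$, rather than a single fixed observable; this is precisely why $\sigma_z$EC alone sits strictly below RSC, as the state $\rho^A$ of Example 1 demonstrates.
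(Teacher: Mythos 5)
Your proof is correct and follows essentially the same route as the paper's: both directions reduce, via the partial-trace identity $\trace(\sigma^{(k)}\rho)=\trace(\sigma\,\bar\rho_k)$, to comparing the reduced states through the Hilbert--Schmidt pairing, with the converse resting on the non-degeneracy of that pairing on Hermitian operators. The only cosmetic difference is that the paper expands each $\bar\rho_k$ in an orthogonal Hermitian operator basis and notes the coefficients $\trace(\rho\,\sigma_\ell^{(k)})$ are independent of $k$, whereas you test directly against the single observable $\sigma=\bar\rho_k-\bar\rho_\ell$ to force $\trace\big((\bar\rho_k-\bar\rho_\ell)^2\big)=0$.
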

\proof RSC implies $\sigma$EC for all $\sigma$ by definition. Conversely, assume $\rho$ to be $\sigma$EC for any $\sigma$. The reduced state for any subsystem can be written as $\bar\rho_k=\sum_\ell\trace(\rho\sigma_\ell^{(k)})\sigma_\ell$, where $\{\sigma_\ell\}$ is an orthogonal, hermitian basis of operators for ${\mathfrak B}(\Hi).$ Since by hypothesis $\trace(\rho\sigma_\ell^{(k)})$ does not depend on $k,$ then all the reduced states are identical.
\qed

A converse result for the relation between RSC and SSC cannot be set up in general. Instead, we have the following property.
\begin{prop}\label{prop-meas}
	RSC states with $\bar\rho_k$ a pure state for each $k$ are also SSC states.
	 \end{prop}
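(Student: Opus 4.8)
The plan is to show that the two hypotheses together force $\rho$ to be the pure product state $(\ket{\psi}\bra{\psi})^{\otimes m}$ for a single vector $\ket{\psi}\in\Hi$, which is then permutation invariant by inspection. The engine of the argument is the elementary fact that a quantum subsystem is in a \emph{pure} state precisely when it is completely uncorrelated from its complement. Concretely, I would first isolate the following lemma: if $\rho$ is a bipartite state on $\Hi_A\otimes\Hi_B$ and $\trace_B(\rho)=\ket{\psi}\bra{\psi}$ is pure, then $\rho=\ket{\psi}\bra{\psi}\otimes\rho_B$, where $\rho_B=\bra{\psi}\rho\ket{\psi}$ is the reduced state on $B$.

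Establishing this lemma is the only non-routine step, and I expect it to be the main obstacle. The cleanest route uses positivity alone. For any $\ket{\phi}\in\Hi_A$ orthogonal to $\ket{\psi}$ and any orthonormal basis $\{\ket{b}\}$ of $\Hi_B$, one has $\sum_b \bra{\phi,b}\rho\ket{\phi,b}=\bra{\phi}\trace_B(\rho)\ket{\phi}=0$. Since $\rho\geq 0$ each summand is nonnegative, so every one vanishes; and for a positive semidefinite operator $\bra{v}\rho\ket{v}=0$ forces $\rho\ket{v}=0$. Hence $\rho$ annihilates every $\ket{\phi,b}$ with $\ket{\phi}\perp\ket{\psi}$, so $\supp(\rho)\subseteq\Span\{\ket{\psi}\}\otimes\Hi_B$, which yields the claimed product form. (Alternatively one could purify $\rho$ and argue that a pure marginal forces the purification itself to factorize, but the positivity argument is self-contained.)

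With the lemma in hand the proposition follows by a short induction on the number of factors. By RSC all reduced states coincide, and by hypothesis each is pure, so $\bar\rho_1=\cdots=\bar\rho_m=\ket{\psi}\bra{\psi}$ for a common $\ket{\psi}$. Applying the lemma with $A=\Hi_1$ and $B=\Hi_2\otimes\cdots\otimes\Hi_m$ gives $\rho=\ket{\psi}\bra{\psi}\otimes\rho_{2\ldots m}$. The marginals of $\rho_{2\ldots m}$ on the remaining factors are again $\bar\rho_2,\ldots,\bar\rho_m$, i.e.\ further copies of $\ket{\psi}\bra{\psi}$, so the same reasoning peels off one pure factor at a time and delivers $\rho=(\ket{\psi}\bra{\psi})^{\otimes m}$.

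Finally, since each $U_\pi$ merely permutes identical tensor factors, $U_\pi\,(\ket{\psi}\bra{\psi})^{\otimes m}\,U_\pi^\dag=(\ket{\psi}\bra{\psi})^{\otimes m}$ for every permutation $\pi$, so $\rho$ is indeed SSC. It is worth noting that the hypotheses in fact yield the much stronger conclusion that such a state must be a pure, fully factorized, and hence symmetric state; the SSC property is then immediate, and the whole difficulty is concentrated in the pure-marginal-implies-product lemma.
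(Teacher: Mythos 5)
Your proof is correct, and it reaches the same intermediate fact as the paper's proof --- that the hypotheses force $\rho = (\ket{\psi}\bra{\psi})^{\otimes m}$, after which SSC is immediate --- but it supports that fact differently. The paper's proof simply \emph{asserts}, as a known property of quantum states, that purity of every reduced state forces $\rho$ to be a pure product state $\ket{\psi_1}\otimes\cdots\otimes\ket{\psi_m}$ (with a priori different factors), and only then invokes RSC to equate the factors up to phase. You instead use RSC at the outset to fix a common $\ket{\psi}$, and you actually prove the factorization via your bipartite lemma: from $\trace_B(\rho)=\ket{\psi}\bra{\psi}$ and positivity you get $\sum_b \bra{\phi,b}\rho\ket{\phi,b}=0$ for every $\ket{\phi}\perp\ket{\psi}$, hence $\rho\ket{\phi,b}=0$ termwise, hence $\supp(\rho)\subseteq\Span\{\ket{\psi}\}\otimes\Hi_B$ and $\rho=\ket{\psi}\bra{\psi}\otimes\rho_B$; the induction that peels off one factor at a time is then routine, and all steps check out (the implication $\bra{v}\rho\ket{v}=0\Rightarrow\rho\ket{v}=0$ is valid for positive semidefinite $\rho$). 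What the paper's version buys is brevity, treating ``pure marginal implies uncorrelated'' as standard background; what yours buys is self-containedness, since the one genuinely nontrivial ingredient is derived from positivity alone rather than cited, and it makes explicit where the hypothesis of purity enters --- correctly identifying it as the entire content of the proposition.
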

\proof If $\bar\rho_k$ is a pure state for each $k$, then necessarily $\rho = \q{\psi}\qd{\psi}$ with $\q{\psi} = \q{\psi_1} \otimes \q{\psi_2} \otimes ... \otimes \q{\psi_m}$ for some $\q{\psi_k}$, $k=1,2,...,m$. If in addition we require RSC, then we need $\q{\psi_k}\qd{\psi_k} = \bar\rho_k = \bar\rho_j = \q{\psi_j}\qd{\psi_j}$ for all $j,k$, thus $\q{\psi_k} = \q{\psi_j}$ up to an irrelevant phase factor for all $j,k$ and particle permutation indeed leaves $\q{\psi}$ invariant. \qed

The structure of the $\sigma$SMC states is further characterized in the following proposition.
\begin{prop}\label{propSYM} A state is in $\sigma$SMC if and only if, defining $\Pi_{\rm sym}=\sum_j \Pi_j^{\;\otimes m},$ it holds
\beq\label{symprop} \trace(\Pi_{\rm sym}\rho)=1,\eeq or equivalently \beq \Pi_{\rm sym}\rho\Pi_{\rm sym}  =\Pi_{\rm sym}\rho= \rho. \eeq
\end{prop}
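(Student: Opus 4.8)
The plan is to prove the three conditions equivalent by going around the cycle $\sigma$SMC $\Rightarrow \trace(\Pi_{\rm sym}\rho)=1 \Rightarrow \Pi_{\rm sym}\rho\Pi_{\rm sym}=\Pi_{\rm sym}\rho=\rho \Rightarrow \sigma$SMC, preceded by some bookkeeping on $\Pi_{\rm sym}$. First I would record that $\Pi_{\rm sym}=\sum_j \Pi_j^{\otimes m}$ is itself an orthogonal projector: since the $\{\Pi_j\}$ are mutually orthogonal projectors with $\sum_j\Pi_j=I$, the blocks $\Pi_j^{\otimes m}$ are mutually orthogonal, whence $\Pi_{\rm sym}^2=\Pi_{\rm sym}=\Pi_{\rm sym}^\dagger$. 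I would then note the algebraic identities $\Pi_{\rm sym}\Pi_j^{(\ell)}=\Pi_j^{\otimes m}=\Pi_j^{(k)}\Pi_{\rm sym}$ and $\Pi_j^{(k)}\Pi_j^{\otimes m}=\Pi_j^{\otimes m}$, all of which follow by evaluating the tensor factors one subsystem at a time using $\Pi_{j'}\Pi_j=\delta_{j'j}\Pi_j$.

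For the equivalence of $\trace(\Pi_{\rm sym}\rho)=1$ with the two operator equalities, I would invoke the standard support lemma: for an orthogonal projector $P$ and a density operator $\rho$, $\trace(P\rho)=1$ iff $P\rho=\rho$ (equivalently $\rho P=\rho$, equivalently $P\rho P=\rho$). Writing $\rho=\sum_a p_a\ket{a}\bra{a}$ spectrally gives $\trace(P\rho)=\sum_a p_a\,\|P\ket{a}\|^2\le 1$, with equality exactly when $P$ fixes every eigenvector carrying positive weight, which is precisely $P\rho=\rho$. This settles the second displayed equivalence essentially for free.

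The substantive direction is $\sigma$SMC $\Rightarrow \trace(\Pi_{\rm sym}\rho)=1$. Here I would introduce, for each pair $k\neq\ell$, the \emph{disagreement} projector $R_{k\ell}:=I-\sum_j \Pi_j^{(k)}\Pi_j^{(\ell)}$, whose expectation is the probability that subsystems $k$ and $\ell$ give different outcomes. Using $\sum_j\Pi_j^{(\ell)}=I$ and then applying \eqref{SMCeq} term by term, I get $\trace(R_{k\ell}\rho)=1-\sum_j\trace(\Pi_j^{(k)}\Pi_j^{(\ell)}\rho)=1-\sum_j\trace(\Pi_j^{(\ell)}\rho)=0$. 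It remains to pass from all pairwise disagreement probabilities being zero to the global agreement probability being one. The clean route is the operator inequality $I-\Pi_{\rm sym}\le \sum_{k<\ell}R_{k\ell}$: all the projectors $\{\Pi_j^{(k)}\}$ commute, so $\Hi^m$ splits into joint eigenblocks labelled by outcome tuples $(j_1,\dots,j_m)$, on each of which $I-\Pi_{\rm sym}$ acts as the indicator of ``not all $j_i$ equal'' while $\sum_{k<\ell}R_{k\ell}$ acts as the number of disagreeing pairs; the former is at most the latter block by block. Pairing against $\rho\ge 0$ then yields $0\le\trace((I-\Pi_{\rm sym})\rho)\le\sum_{k<\ell}\trace(R_{k\ell}\rho)=0$, i.e.\ $\trace(\Pi_{\rm sym}\rho)=1$.

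Finally, for $\Pi_{\rm sym}\rho\Pi_{\rm sym}=\rho \Rightarrow \sigma$SMC, I would substitute $\rho=\Pi_{\rm sym}\rho\Pi_{\rm sym}$ into both sides of \eqref{SMCeq} and collapse them with the identities of the first paragraph, obtaining $\trace(\Pi_j^{(k)}\Pi_j^{(\ell)}\rho)=\trace(\Pi_{\rm sym}\Pi_j^{(k)}\Pi_j^{(\ell)}\Pi_{\rm sym}\rho)=\trace(\Pi_j^{\otimes m}\rho)$ and likewise $\trace(\Pi_j^{(\ell)}\rho)=\trace(\Pi_j^{\otimes m}\rho)$, so the two sides coincide. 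I expect the main obstacle to be precisely the subadditivity step in the $\sigma$SMC $\Rightarrow$ direction: the hypothesis is only \emph{pairwise}, whereas $\Pi_{\rm sym}$ encodes simultaneous agreement of all $m$ subsystems, and one must justify the union-bound passage rigorously — which is exactly what the commutativity of the measurement projectors, and hence their joint (classical) diagonalization, makes legitimate.
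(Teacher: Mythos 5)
Your proof is correct, and most of its skeleton coincides with the paper's: the observation that $\Pi_{\rm sym}$ is an orthogonal projector, the support lemma equating $\trace(\Pi_{\rm sym}\rho)=1$ with $\Pi_{\rm sym}\rho\Pi_{\rm sym}=\Pi_{\rm sym}\rho=\rho$, and the collapse of both sides of \eqref{SMCeq} to $\trace(\Pi_j^{\otimes m}\rho)$ via the identities $\Pi_j^{(k)}\Pi_{\rm sym}=\Pi_j^{\otimes m}$ are exactly how the paper handles the easy direction. Where you genuinely diverge is the substantive direction, $\sigma$SMC $\Rightarrow\trace(\Pi_{\rm sym}\rho)=1$. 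You argue directly: you introduce the disagreement projectors $R_{k\ell}=I-\sum_j\Pi_j^{(k)}\Pi_j^{(\ell)}$, show each has zero expectation under \eqref{SMCeq}, and conclude through the union-bound operator inequality $I-\Pi_{\rm sym}\le\sum_{k<\ell}R_{k\ell}$, which is legitimate because all operators involved are sums of the commuting blocks $\Pi_{j_1}\otimes\cdots\otimes\Pi_{j_m}$ and the inequality holds block by block. The paper instead argues by contrapositive: assuming $\trace((I-\Pi_{\rm sym})\rho)>0$, it expands $I-\Pi_{\rm sym}$ over the non-constant tuples, extracts one block $\Pi_{\bar j_1}\otimes\cdots\otimes\Pi_{\bar j_m}$ carrying positive weight, picks a pair $k,\ell$ with $\bar j_k\neq\bar j_\ell$, and shows via projection inequalities that $\trace(\Pi_{\bar j_\ell}^{(k)}\Pi_{\bar j_\ell}^{(\ell)}\rho)<\trace(\Pi_{\bar j_\ell}^{(\ell)}\rho)$, i.e.\ a specific instance of \eqref{SMCeq} fails strictly. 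Both arguments rest on the same commutative block decomposition, but they buy different things: yours yields a quantitative, robust statement, $1-\trace(\Pi_{\rm sym}\rho)\le\sum_{k<\ell}\trace(R_{k\ell}\rho)$, so approximate pairwise measurement agreement forces approximate support on the symmetric blocks with constant $\binom{m}{2}$; the paper's yields an explicit witness, identifying the particular pair of subsystems and the particular outcome on which single-measurement consensus is violated.
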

\proof
Note that the properties $\trace(\Pi_{\rm sym}\rho)=1$ and $\Pi_{\rm sym}\rho\Pi_{\rm sym} = \Pi_{\rm sym}\rho = \rho$ are equivalent because $\Pi_{\rm sym}$ is an orthonormal projector and $\rho$ is self-adjoint positive semi-definite with unit trace.
 Assume \eqref{symprop} to hold. Along with the identities
$\Pi_j^{(k)} \Pi_{\rm sym} = \Pi_j^{(k)} \Pi_j^{\otimes m} = \Pi_j^{\otimes m}$, this gives:
\begin{eqnarray*}
\trace(\Pi_j^{(\ell)}\Pi_j^{(k)}\rho) & = & \trace(\Pi_j^{(\ell)}\Pi_j^{(k)} \Pi_{\rm sym}\rho )\\
& = & \trace(\Pi_j^{(\ell)}\Pi_j^{\otimes m}\rho )\\
 & = & \trace(\Pi_j^{(\ell)} \Pi_{\rm sym} \rho )\\
& = & \trace(\Pi_j^{(\ell)} \rho ) \,,
\end{eqnarray*}
for all $j,k,\ell$. Hence, the SMC definition \eqref{SMCeq} indeed holds.
\newline On the other hand, suppose that \eqref{symprop} does not hold. This means that $\trace((I-\Pi_{\rm sym})\rho) > 0.$ We want to show that this implies 
$$ \trace(\Pi_j^{(k)}\Pi_j^{(\ell)}\rho)  \neq  \trace(\Pi_j^{(\ell)}\rho)$$
for some $j,k,\ell$. Let us write
\[I-\Pi_{\rm sym}=\sum_{j_1,\ldots, j_m \text{ except } \{j_1=\ldots = j_m\}} \; \Pi_{j_1}\otimes \ldots \otimes \Pi_{j_m}.\]
Since $\trace((I-\Pi_{\rm sym})\rho) > 0$ implies that $\trace( \Pi_{j_1}\otimes \ldots \otimes \Pi_{j_m} \rho) > 0$ for at least one of the terms in the above sum, let us take one such term, denote the corresponding indices as $\{ \bar j_s \}$ and denote by $k, \ell$ two subsystems such that $\bar j_k\neq \bar j_\ell $ in that term. Now writing $\Pi_{\bar j_1}\otimes \ldots \otimes \Pi_{\bar j_m} =  \Pi_{\bar j_1}^{(1)} \, \Pi_{\bar j_2}^{(2)} ... \, \Pi_{\bar j_m}^{(m)}$, where all factors commute, we have:
\[\trace(\Pi_{\bar j_k}^{(k)}\Pi_{\bar j_\ell}^{(\ell)}\rho)\geq \trace(\Pi_{\bar j_1}^{(1)} \, \Pi_{\bar j_2}^{(2)} ... \, \Pi_{\bar j_m}^{(m)}\rho) >0.\]
Exploiting basic projection properties we thus get:
\begin{eqnarray*}
\trace(\Pi_{\bar j_\ell}^{(k)}\Pi_{\bar j_\ell}^{(\ell)}\rho) & \leq & \trace( (1-\Pi_{\bar j_k}^{(k)}) \Pi_{\bar j_\ell}^{(\ell)} \rho) \\
& = & \trace(\Pi_{\bar j_\ell}^{(\ell)} \rho) - \trace(\Pi_{\bar j_k}^{(k)} \Pi_{\bar j_\ell}^{(\ell)} \rho) \\
& \leq & \trace(\Pi_{\bar j_\ell}^{(\ell)} \rho ) - \trace(\Pi_{\bar j_1}^{(1)} \, \Pi_{\bar j_2}^{(2)} ... \, \Pi_{\bar j_m}^{(m)} \rho) \\
& < & \trace(\Pi_{\bar j_\ell}^{(\ell)} \rho) \, .
\end{eqnarray*}
\qed

To conclude, the following proposition connects $\sigma$SMC to the other properties.
\begin{prop}\label{prop:allto4} (a) $\sigma$SMC implies $\sigma$EC, while the converse is not true.
	\newline (b) $\sigma$SMC for a $\sigma$ with nondegenerate spectrum implies RSC, while the converse is not true.
	\newline (c) $\sigma$SMC for a $\sigma$ with nondegenerate spectrum implies SSC, while the converse is not true.
\end{prop}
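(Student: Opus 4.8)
The plan is to build all three implications on Proposition \ref{propSYM}, which characterizes a $\sigma$SMC state as one supported on the range of $\Pi_{\rm sym}=\sum_j\Pi_j^{\otimes m}$, i.e. $\rho=\Pi_{\rm sym}\rho=\Pi_{\rm sym}\rho\,\Pi_{\rm sym}$. Each forward implication then follows by inserting this factorization and simplifying products of mutually commuting projectors, while the three converses are dispatched at once by pointing to the states of Example 1.

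For part (a) I would start from the elementary identity $\Pi_j^{(k)}\Pi_{\rm sym}=\Pi_j^{\otimes m}$ already used inside the proof of Proposition \ref{propSYM}: on factor $k$ only the $j$-th term of $\Pi_{\rm sym}$ survives the product, and $\Pi_j^2=\Pi_j$. Summing against $\sigma=\sum_j s_j\Pi_j$ gives $\sigma^{(k)}\Pi_{\rm sym}=\sum_j s_j\Pi_j^{\otimes m}$, an operator that no longer depends on $k$. Since $\rho=\Pi_{\rm sym}\rho$, I obtain $\trace(\sigma^{(k)}\rho)=\trace(\sigma^{(k)}\Pi_{\rm sym}\rho)=\trace\!\big(\sum_j s_j\Pi_j^{\otimes m}\rho\big)$, the same value for every $k$, which is exactly $\sigma$EC.

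For parts (b) and (c) the extra hypothesis that $\sigma$ has nondegenerate spectrum is what does the work: each $\Pi_j=\q{e_j}\qd{e_j}$ is rank one for an orthonormal basis $\{\q{e_j}\}$ of $\Hi$, so $\Pi_{\rm sym}$ is the projector onto the span of the ``diagonal'' product states $\q{e_j}^{\otimes m}$. Writing $\rho=\sum_{j,j'}\rho_{jj'}\,\q{e_j}^{\otimes m}\qd{e_{j'}}^{\otimes m}$, for (b) I would compute the reduced states directly: tracing out every factor but $k$ contributes $\prod_{p\neq k}\braket{e_{j'}}{e_j}=\delta_{jj'}$ (here $m\geq 2$ is used), leaving $\bar\rho_k=\sum_j\rho_{jj}\,\q{e_j}\qd{e_j}$ independently of $k$, hence RSC; equivalently one can check $\tau$EC for every $\tau$ by the same one-line computation and invoke Proposition \ref{prop-rsc}. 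For (c) I would note that each $\q{e_j}^{\otimes m}$ is fixed by every permutation, so $U_\pi\Pi_{\rm sym}=\Pi_{\rm sym}=\Pi_{\rm sym}U_\pi^\dagger$; combined with $\rho=\Pi_{\rm sym}\rho\,\Pi_{\rm sym}$ this yields $U_\pi\rho\, U_\pi^\dagger=\rho$, i.e. SSC.

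The converses are immediate from Example 1: $\rho^A$ is $\sigma_z$EC but not $\sigma_z$SMC, $\rho^B$ is RSC but not $\sigma_z$SMC, and $\rho^C$ is SSC but not $\sigma_z$SMC, disproving the three converse implications in turn. The only point demanding care is isolating where nondegeneracy is genuinely needed in (b) and (c): if $\sigma$ has a repeated eigenvalue then some $\Pi_j$ has rank larger than one, and $\Pi_j^{\otimes m}$ then supports entangled vectors inside that eigenspace block which are neither permutation-symmetric nor equal-marginal (e.g. $\q{e_1 e_2\cdots}\qd{e_1 e_2\cdots}$ with $e_1,e_2$ spanning a degenerate eigenspace). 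The rank-one reduction used above collapses for such states, so both implications genuinely require a simple spectrum; everything else is routine manipulation of commuting projectors.
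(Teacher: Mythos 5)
Your proposal is correct and takes essentially the same route as the paper: the substantive parts (b) and (c) rest, exactly as in the paper, on Proposition \ref{propSYM}, which forces $\rho$ into the diagonal form $\sum_{j,j'}\rho_{jj'}\,\q{e_j}^{\otimes m}\qd{e_{j'}}^{\otimes m}$ in the nondegenerate case, and you invoke the identical counterexamples $\rho^A,\rho^B,\rho^C$ from Example 1 for the three converses. The only organizational differences are that the paper proves (a) directly from the symmetry of the SMC condition \eqref{SMCeq} in $k,\ell$ (no need for Proposition \ref{propSYM} there), and obtains (b) for free from (c) via the implication SSC $\Rightarrow$ RSC of Theorem \ref{hierarchy}, whereas you compute the marginals explicitly; these variants are equally valid (one minor wording slip: your degenerate-case witness $\q{e_1 e_2\cdots}$ is a product state, not an entangled one, though it does serve its purpose).
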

\proof For (a), we have since \eqref{SMCeq} holds for all $k,\ell$:
\beqan\trace(\Pi_j^{(k)}\rho)=\trace(\Pi_j^{(k)}\Pi_j^{(\ell)}\rho)=\trace(\Pi_j^{(\ell)}\rho).\eeqan
By linearity, we thus have:
\beqan\trace(\sigma^{(k)}\rho)&=&\sum_js_j\trace(\Pi_j^{(k)}\rho)=\sum_js_j\trace(\Pi_j^{(\ell)}\rho)\\&=&\trace(\sigma^{(\ell)}\rho).\eeqan
A counterexample for the converse is state $\rho^A$ in Example 1.

Counterexamples for the converse of (b) and (c) are respectively states $\rho^B$ and $\rho^C$ in Example 1. For the direct statements, given Proposition \ref{hierarchy}, we know that if (c) is true, then (b) must be true as well. Let us then focus on (c). Take the representation of $\rho$ in the basis associated to $\sigma=\sum_j s_j \q{j}\qd{j}$, where thus $\q{j}\qd{j}=\Pi_j$, that reads
\begin{align}\label{eq:fullwrittenout}&\rho =\hspace{-3mm}\sum_{{\tiny\hspace{-0mm}\begin{array}{l}{j_1,j_2,...j_m,}\\{\,k_1,k_2,...k_m}\end{array}}}   r_{{\tiny\hspace{-2mm}\begin{array}{l}{j_1,j_2,...j_m,}\\{\,k_1,k_2,...k_m}\end{array}}} \hspace{-2mm}\q{ {j_1}, {j_2},..., {j_m}}\qd{ {k_1}, {k_2},..., {k_m}}\, . \nonumber
\end{align}
From Proposition \ref{propSYM}, the condition for $\sigma$SMC writes
$$\sum_{k,j} \, (\q{k}\qd{k})^{\otimes m}\, \rho\, (\q{j}\qd{j})^{\otimes m}  = \rho \, ,$$
so \eqref{eq:fullwrittenout} must reduce to
\begin{align}\rho = \sum_{k,j} \, p_{kj} \q{kk\ldots k}\qd{jj \ldots j} \quad\end{align}
for some $p_{kj} \in \mathbb{C}$.
It is straightforward to see that a $\rho$ of this form satisfies SSC, since any element in the sum is invariant w.r.t.~subsystem permutations.\qed


\subsection{On detecting consensus}

In the quantum setting, the relation between state and accessible information is stochastic at a fundamental level. In fact, even a ``deterministic'', maximal information state yields probabilistic outcomes for certain observables. As we are compelled to use probabilistic notions, consensus can only be inferred from  stochastic measurement records, and different types of consensus  require different types of measurement statistics.

The $\sigma$EC, requiring only equal expectations for a particular observable $\sigma$ on the different subsystems, is trivial to relate to measurement results; in particular, checking $\sigma$EC does not require to monitor correlations between measurement results on different subsystems. Checking RSC requires statistics for a basis of observables for each subsystem; as for $\sigma$EC, potential correlations between measurement results on different subsystems play no role.
On the other hand, distinguishing SSC from RSC does require to inspect correlations between measurement outcomes at different subsystems -- except for the case mentioned in Proposition \ref{prop-meas}, where SSC can be inferred directly from the special form of the reduced state of each subsystem. A proof of this is provided in Appendix \ref{app2}. For instance, considering the state $\rho^B$ of Example 1, measurements of $\sigma_z$ on the three subsystems would quickly show that the results on subsystems 2 and 3 are always perfectly correlated, and show no correlation at all with the results on the first subsystem. This difference in correlations rules out $\rho^B$ as a candidate for SSC.
The definition of $\sigma$SMC is all about correlations between measurement outcomes at different subsystems: the latter must be fully correlated for a particular observable $\sigma$. 

Positively detecting states in ${\cal C}_{\rm SSC} \setminus {\cal C}_{\sigma \text{SMC}},$ however, appears to be less obvious (except through full state tomography). Indeed, a priori it would require to check how permutations affect the full state, including not only classical correlations but also \emph{entanglement} {(see Appendix \ref{miltipartite} for a quick review)}, an intrinsically quantum type of correlation that plays a central role in quantum information theory and its applications \cite{nielsen-chuang,WildeQIT}.


\section{Quantum Evolutions for Consensus}\label{sec:model}

So far we have been concerned with discussing ``static'' properties of consensus states. However, the core of the problem is the design of dynamical interactions (or algorithms) that drive the system to consensus. In the following sections, we will establish which dynamics are needed to drive an arbitrary initial state towards a consensus state. In addition to this, as in the classical consensus problems, we shall require the final state to preserve or express some property dependent on the initial state. For example, when classical average consensus is reached, each agent locally contains (``has computed'') the average of the initial state values, which is a global property. In the quantum case as well, a goal when reaching consensus would often be to retrieve, in the final state of any local subsystem, some global information about the initial state. 


\subsection{Classical Dynamics and Locality}\label{subsec:dyns&locality:classical}

For classical consensus, the starting point is a first-order integrator dynamics for each individual agent, of the type:
\begin{equation}\label{eq:ClassicalBasis}
	x_k(t+1) = x_k(t) + u_k(t) \quad \text{or} \quad \tfrac{d}{dt}x_k(t) = u_k(t), 
\end{equation}
for $k=1,2,...,m.$ In this paper we focus on the discrete-time case. The inputs $u_k(t)$ for the agents can take different forms, but they are all based on \emph{local information}. A notion of locality is traditionally introduced by specifying a graph $G(V,E)$ whose vertices are the agents $1,2,...,m$ and where an edge $(j,k) \in E$ (unordered pair of vertices) indicates that agents $j$ and $k$ are neighbors. Then for each $k$, input $u_k$ is restricted to depend only on $x_k$ and on the states $x_j$ of agents $j$ for which $(j,k) \in E$. Mostly, $u_k$ is written as the sum of contributions from different edges, $u_k(t) = {\textstyle \sum_{j : (j,k) \in E}} \; f_{j,k}(x_k(t),x_j(t))\, .$ 
Directed graphs can be used to model directed information flow, i.e.~$E$ contains directed edges (ordered pairs of vertices) such that $(j,k) \in E$ indicates that $u_k$ depends on $x_j$, but not necessarily $u_j$ on $x_k$ (the latter being governed by another potential edge, $(k,j) \in E$). Weights $w_{(j,k)}$ can be associated to the edges to model their different relative strengths. 
Besides the \emph{given} asymmetry through edge weights, one usually forbids $u_k$ to explicitly use agent identifiers and apply a different treatment to information coming from or going towards different neighbors. This implies
\begin{equation}\label{eq:ClassicalukForm}
	u_k(t) = {\textstyle \sum_{j : (j,k) \in E}} \; w_{(j,k)} \, f(x_j,x_k) \, ,
\end{equation}
where now $f$ must be independent of $j,k$. Finally, one may request that the influence of agent $j$ on agent $k$ be exactly equivalent to the reverse influence, of agent $k$ on agent $j$. Then $u_k$ of the type \eqref{eq:ClassicalukForm} features symmetric weights, $w_{(j,k)} = w_{(k,j)}$ --- leading to an undirected weighted graph --- and the antisymmetry property $f(x,y) = -f(y,x)$ for any $x,y \in \R^n$. Such evolution would preserve the average of the state values.

More generally, locality can be defined on the basis of \emph{quasi-local operators}. Instead of considering a graph, we define a set of \emph{neighborhoods} $\mathcal{N}_j \subseteq \{1,\dots,m\}$ for $j=1,\dots,M$, and a quasi-local operator is one that leaves the states of all subsystems unchanged except those of one neighborhood $\mathcal{N}_j$.
Then a simple dynamics with local coupling would write
\begin{equation}\label{eq:ClassicalAlt}
	x(t+1) = \, {\textstyle \sum_{j=1}^M} \; V_j \, (x(t))
\end{equation}
where $V_j : \R^{mn} \rightarrow \R^{mn}$ for each $j$ is a quasi-local operator acting on the neighborhood $\mathcal{N}_j$. Treating all involved agents equivalently can be formulated as requiring that for each $j$, the quasi-local operator satisfies $P V_j = V_j P$ for all the permutations $P$ that only permute agents of $\mathcal{N}_j$. The classical consensus algorithm \eqref{eq:ClassicalBasis},\eqref{eq:ClassicalukForm} is obtained by taking two-agent neighborhoods only, and identifying each $\mathcal{N}_j$ with an edge of the graph.


\subsection{Quantum Dynamics and Locality}\label{subsec:dyns&locality:quantum}

According to Schr\"odinger's equation, {\em isolated} quantum systems evolve unitarily \cite{sakurai,nielsen-chuang}. However, unitary dynamics are not enough when we are interested in studying or engineering convergence features for a quantum system. A more general framework that includes (Markovian) open-system evolutions is offered by {\em quantum channels} \cite{kraus, nielsen-chuang}, that is, linear, completely positive (CP) and trace preserving (TP) maps from density operators to density operators $\Ei: \mathfrak{D}(\Hi^m)\rightarrow \mathfrak{D}(\Hi^m).$  It can be shown that such maps admit an {\itshape operator sum representation} (OSR), also known as {\itshape Kraus decomposition}:
\begin{equation}
\label{eq:Kraus}
\Ei(\rho)=\sum_{k=1}^{K}\, A_k\rho A_k^{\dagger} \quad \text{with} \quad \sum_{k=1}^{K} A_k^{\dagger}A_k= I
\end{equation}
where $K\leqslant (\dim(\Hi))^2$.
The representation is not unique, but all the representations can be obtained as unitary combination of the operators of a given one (see \cite[Theorem 8.2]{nielsen-chuang}). 
A CPTP map is said {\itshape unital} if $\Ei( I)= I$. Unital quantum channels can always be represented as given by random unitaries \cite{wolf-unital}. A channel belongs to this class when it admits an OSR with operators $A_k=\sqrt{p_k}U_k$, with $U_k\in {\mathfrak U}(\Hi^m)$ and $p_k\geq 0$ such that $\sum_kp_k=1$:
\[{\cal E}(\rho)=\sum_k p_k\, U_k\rho U_k^\dag. \]Such a map can be thought of as a probabilistic mixture of unitary evolutions.

Given a CPTP map $\Ei$, we can define its dual map with respect to the Hilbert-Schmidt inner product $\Ei^{\dagger}: \mathfrak{B}(\Hi)\rightarrow \mathfrak{B}(\Hi)$ through the relation:
\begin{equation}\label{eq:State/Meas-Duality}
\trace[A\,\Ei(\rho)]=\trace[\Ei^{\dagger}(A)\,\rho] \, .
\end{equation}
This dual map is still linear and completely positive, while the fact that ${\cal E}$ is trace preserving implies that ${\cal E}^\dag $ is always unital. Considering the dynamics in the dual picture, i.e.~with time-invariant states and maps acting on the observables, is called Heisenberg's picture in the physics literature and provides an equivalent description of quantum system evolution.

We now introduce {\em locality notions} for the quantum network. Consider the multipartite system introduced in Section \ref{subsec:def:quantum}: following \cite{ticozzi-QL}, we say that an operator in $\mathfrak{B}(\Hi )$ is quasi-local if it acts non-trivially only on one neighborhood $\mathcal{N}_j \subseteq \{1,\dots,m\}$:
\begin{defin}[Quantum quasi-local operator]
An operator $V$ is quasi-local with respect to a set of neighborhoods $\{ \mathcal{N}_j \; , \, j=1,2,...,M \}$, if and only if there exists $j \in \{1,2,...,M\}$ such that:
\begin{equation}
\label{eq:quasilocaloperator}
V=V_{\mathcal{N}_j}\otimes I_{\compl{{\mathcal{N}}_j}}
\end{equation}
where, with a slight abuse of notation, $V_{\mathcal{N}_j}$ accounts for the non trivial action on $\Hi_{\mathcal{N}_j}$ and $I_{\compl{\mathcal{N}_j}}=\bigotimes_{k\notin \mathcal{N}_j}I_{k}$.\\
\end{defin}


\subsection{Timing of operations and evolution types}\label{subsec:timing}

In classical consensus, an important aspect is that the graph (and the related interaction law) can be time-varying. For instance one can assume that all edges are activated for the whole time (\emph{synchronous update}), at the other extreme that they are activated one at a time, or some at each time (\emph{asynchronous update}), according to some predefined time-varying sequence or by random selection of edges. Interestingly, convergence properties for all these cases can be linked to the connectedness of the ``average graph'' \cite{Moreau2005}.

In the quantum case also this distinction can be made. The elementary dynamical interaction that we consider, replacing ``one edge'' of the classical case, is a CPTP map involving one neighborhood only:
\beq \label{eq:bblock} {\cal E}_{{\cal N}_j}(\rho)=\sum_k p_k V_k(t)\rho V_k^\dag(t), \eeq
where all the $V_k(t)\in{\mathfrak U}(\Hi^m)$ are quasi-local with respect to the neighborhood ${\cal N}_j$. One of the reasons for focusing on this class of evolutions stems directly from applications: methods for implementing unitary evolutions, as well as related unital channels with the aid of some ancillary systems, are available in a number of diverse experimental settings. On the other hand, constructing arbitrary quantum channels is a more challenging task \cite{albertini-feedback}, and can be generally done with good approximation only in the limit of fast control and/or short time scales \cite{viola-engineering}. The building block \eqref{eq:bblock} can lead to different evolutions for the whole system, depending on neighborhood selection:
\begin{itemize}
\item \emph{Random single interaction:} at each time $t$ one neighborhood ${\cal N}_{j(t)}$ is selected at random, $j(t)$ being a single-valued random variable onto the neighborhood index set.
\item \emph{Cyclic single interaction:} at each time $t$ one neighborhood ${\cal N}_{j(t)}$ is selected deterministically, for example periodically cycling between the available $j$. 
\item \emph{Random or cyclic asynchronous interactions:} similar to the previous options, but a subset of several neighborhoods is selected at each time $t$. We can request the selected neighborhoods to be disjoint or not. \emph{This choice has an effect for implementation, but not for convergence properties of our algorithm,} so we will not consider it further.
\item \emph{Synchronous interaction:} all the available interactions are activated at each time, weighted by some $q_j\geq 0$ with $\,\sum_j q_j = 1$ in order to maintain a valid trace-preserving map: 
\beq\label{timeindip}{\cal E}(\rho)=\sum_j \, q_j \, \Ei_{{\cal N}_{j}}(\rho) \, .\eeq
\item \emph{Expected evolution:} we study the evolution {\em in expectation} of the random interaction protocol which selects neighborhood $\mathcal{N}_j$ with probability $q_j$ at each $t$. Remarkably, the evolution to $\rho_{t+1}$ given $\rho_t$ then follows the same law \eqref{timeindip} as the synchronous case. Note that convergence of the expected evolution to consensus does not guarantee (at all) that a(ny) single evolution, determined by a realization of the random process $\{j(t)\}_{t\geq 0},$ would converge to consensus. Nevertheless, 
the statistics of any measurements performed at any time on the system will be exactly the same for \eqref{timeindip} as for the associated random evolution. In this sense, convergence in expectation is indistinguishable from trajectory-wise convergence. 
\end{itemize}

The last two cases involve a time-independent map. Another time-independent map is obtained if we consider cyclic interactions of period $T$ and we focus on the state at the end of every cycle:
\beq\label{cyclic}\rho_{t+T}={\cal E}_C(\rho_t)={\cal E}_{{\cal N}_{T}}\circ\ldots\circ{\cal E}_{{\cal N}_{1}}(\rho_t).\eeq

The consensus goal can now be specified formally.

Let $d(\rho_a,{\cal C})=\inf_{\rho \in{\cal C}}\|\rho_a-\rho\|,$ where ${\cal C} \subset {\mathfrak D}(\Hi)$ and $\| \cdot \|$ is any $p$-norm on ${\mathfrak B}(\Hi).$ Given a sequence of channels $\{\Ei_t(\cdot)\}_{t=0}^\infty,$ define $\hat{\Ei}_t(\rho_0)=\rho_t={\cal E}_t\circ\Ei_{t-1}\circ\cdots\circ{\cal E}_1(\rho_0).$
\begin{defin}[Asymptotic Consensus]\label{def:asy}
A sequence of channels $\{\Ei_t(\cdot)\}_{t=0}^\infty,$ is said to \emph{asymptotically achieve $\sigma$EC} if 
\begin{equation}
\label{evolution}
\lim_{t \rightarrow \infty}d(\hat{\Ei}_t(\rho_0),\mathcal{C}_{\sigma\text{EC}})=0,
\end{equation}
for all initial states $\rho_0$.
\end{defin}
The same definition holds for RSC, SSC, and $\sigma$SMC by substituting the corresponding state sets in $(\ref{evolution})$.
\begin{defin}[Asymptotic Average Consensus]\label{def:averageQcons}
We say that the sequence of channels $\{\Ei_t(\cdot)\}_{t=0}^\infty$ asymptotically achieves \emph{$S$-average $\sigma$EC} for some $S\in{\mathfrak H}(\Hi^{m})$ if it asymptotically achieves $\sigma$EC and for all $\rho_0$, it holds:
\beqa\label{eq:avgconsensus}\lim_{t \rightarrow \infty}\trace(\sigma\bar\rho_\ell(t)) &=& \lim_{t \rightarrow \infty}\trace(\sigma^{(\ell)}\rho(t))=\lim_{t \rightarrow \infty}\trace(S \rho(t))\nonumber\\ & =& ~\trace(S \rho_0)\eeqa for all $\ell\in\{1,\ldots ,m\}$. The same definition holds for $\sigma$SMC.
\newline
We say that the sequence of channels $\{\Ei_t(\cdot)\}_{t=0}^\infty$ asymptotically achieves \emph{$S$-average RSC (resp.~SSC)} if it asymptotically achieves RSC (resp.~SSC) and for $S\in{\mathfrak H}(\Hi^{m})$ {\em there exists} a $\sigma \in {\mathfrak H}(\Hi)$ such that \eqref{eq:avgconsensus} holds for all $\rho_0$.
\end{defin}

By expressing the action of quantum channels in the dual (Heisenberg) picture, it is possible to obtain a clear characterization of the dynamics that satisfy \eqref{eq:avgconsensus}.
\begin{prop}\label{dualcons}
Consider a sequence of CPTP channels $\{\Ei_t(\cdot)\}_{t=0}^\infty,$ and call $\hat \Ei_t=\Ei_t\circ\Ei_{t-1}\circ\ldots\circ\Ei_1$. 
The associated dynamics satisfies \eqref{eq:avgconsensus} if and only if 
\beq \label{dualavg} S=\lim_{t\rightarrow \infty}\hat\Ei_t^\dag(S) \quad \text{and} \quad \lim_{t \rightarrow \infty}\hat\Ei_t^\dag(\sigma^{(\ell)}) = S 
\eeq for $\ell=1,2,\ldots, m,$ 
where $\hat{\Ei}^{\dagger}_t =\Ei^{\dagger}_{1} \circ \Ei^{\dagger}_{2} \circ \dots \circ \Ei_t^{\dagger}$. \end{prop}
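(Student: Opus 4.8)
The plan is to transfer everything to the Heisenberg (dual) picture via the duality \eqref{eq:State/Meas-Duality} and then invoke the elementary fact that a Hermitian operator is completely determined by its expectation values on all states. Writing $\rho(t)=\hat\Ei_t(\rho_0)$ and using $(\Ei_t\circ\cdots\circ\Ei_1)^\dag=\Ei_1^\dag\circ\cdots\circ\Ei_t^\dag=\hat\Ei_t^\dag$, duality gives, for every $t$ and every $\rho_0$, the identities $\trace(\sigma^{(\ell)}\rho(t))=\trace(\hat\Ei_t^\dag(\sigma^{(\ell)})\rho_0)$ and $\trace(S\rho(t))=\trace(\hat\Ei_t^\dag(S)\rho_0)$. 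I would first observe that the leftmost equality in \eqref{eq:avgconsensus} is automatic: $\trace(\sigma\bar\rho_\ell(t))=\trace(\sigma^{(\ell)}\rho(t))$ holds identically by the defining property of the partial trace, so it imposes no constraint. Hence the real content of \eqref{eq:avgconsensus} (for all $\rho_0$) reduces to the two limiting scalar identities $\lim_t\trace(\hat\Ei_t^\dag(\sigma^{(\ell)})\rho_0)=\lim_t\trace(\hat\Ei_t^\dag(S)\rho_0)$ and $\lim_t\trace(\hat\Ei_t^\dag(S)\rho_0)=\trace(S\rho_0)$, each required for all $\rho_0$ and all $\ell$.

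Next I would translate these scalar conditions into operator identities. The key lemma is that for Hermitian $X,Y$ one has $\trace(X\rho_0)=\trace(Y\rho_0)$ for all density operators $\rho_0$ if and only if $X=Y$: the states affinely span the trace-one hyperplane (e.g.\ $I/n^m+\epsilon H$ is a state for any traceless Hermitian $H$ and small $\epsilon$), so their linear span is all of $\mathfrak{H}(\Hi^m)$, and a linear functional vanishing on a spanning set vanishes. Since the duals of CPTP maps are completely positive and unital, the operators $\hat\Ei_t^\dag(S)$ and $\hat\Ei_t^\dag(\sigma^{(\ell)})$ are Hermitian (and norm-bounded), so the lemma applies to them.

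For the backward direction (\eqref{dualavg}$\Rightarrow$\eqref{eq:avgconsensus}) the argument is immediate: if $\hat\Ei_t^\dag(S)\to S$ and $\hat\Ei_t^\dag(\sigma^{(\ell)})\to S$ as operators, then by continuity of $X\mapsto\trace(X\rho_0)$ both $\trace(S\rho(t))$ and $\trace(\sigma^{(\ell)}\rho(t))$ converge to $\trace(S\rho_0)$, which is exactly \eqref{eq:avgconsensus}. For the forward direction I would use finite-dimensionality: from $\lim_t\trace(\hat\Ei_t^\dag(S)\rho_0)=\trace(S\rho_0)$ for all states $\rho_0$, linearity extends the trace-convergence to the whole spanning set, hence against a basis of $\mathfrak{H}(\Hi^m)$, so every Hilbert--Schmidt coordinate of $\hat\Ei_t^\dag(S)$ converges to that of $S$ and thus $\hat\Ei_t^\dag(S)\to S$ as operators; using condition (ii) to identify the common limit as $\trace(S\rho_0)$ and applying the same reasoning to $\sigma^{(\ell)}$ yields $\hat\Ei_t^\dag(\sigma^{(\ell)})\to S$.

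I expect the only genuinely delicate point to be the passage from convergence of expectation values to convergence of the operators themselves, i.e.\ guaranteeing that the \emph{operator} limits in \eqref{dualavg} exist rather than merely the scalar limits of the traces. This is exactly what the finite-dimensional spanning argument above delivers, but it is worth stating explicitly that it is the finite dimension of $\Hi$ (hence of $\mathfrak{H}(\Hi^m)$), together with the boundedness of the unital maps $\hat\Ei_t^\dag$, that lets pointwise trace-convergence upgrade to operator convergence; in infinite dimensions this step would require additional care.
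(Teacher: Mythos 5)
Your proof is correct and follows essentially the same route as the paper's: both directions are handled by the duality relation \eqref{eq:State/Meas-Duality}, with the backward implication immediate and the forward one obtained by turning the scalar limits (valid for all $\rho_0$) into operator limits. The paper dismisses that last step as "easy to obtain by duality, taking the limit inside the trace functional"; your finite-dimensional spanning argument (density operators span $\mathfrak{H}(\Hi^m)$, so pointwise trace convergence upgrades to operator convergence) is precisely the justification it leaves implicit.
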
 
\begin{proof} The conditions \eqref{dualavg} clearly imply \eqref{eq:avgconsensus}. On the other hand, if \eqref{eq:avgconsensus} holds for all $\rho_0,$ it is easy to obtain \eqref{dualavg} by duality, taking the limit inside the trace functional. \end{proof}
The first of the equalities in \eqref{dualavg} holds in particular for the natural situation where $\Ei_t^\dag(S) = S$ for all $t$.


\section{A Gossip Algorithm for Quantum Consensus}\label{sec:Algorithms}

We now propose actual interactions that drive the quantum network to average consensus. As a building block, we focus on the interaction between two subsystems while the others remain unchanged; all neighborhood-activation options build on this elementary case, as explained above.


\subsection{Another viewpoint on the Classical Gossip Algorithm}\label{subsec:Gossip}

The standard linear consensus algorithm corresponds to \eqref{eq:ClassicalBasis},\eqref{eq:ClassicalukForm} with $f(x,y) = \alpha(x-y)$. Its form with a single interaction activated at any time --- also called \emph{gossip algorithm} --- is usually described as follows~\cite{BoydGossip}. At each iteration, a single edge $(j,k)$ is selected from the set $E(t)$ of available edges at that time. The associated agents move towards each other / their mean value, according to:
\begin{eqnarray}
\nonumber	x_j(t+1) & = & x_j(t) + \alpha (x_k(t)-x_j(t))\\ & =&\nonumber  (1-\beta) x_j(t) + \beta \,\tfrac{x_j(t)+x_k(t)}{2} \\
\nonumber	x_k(t+1) & = & x_k(t) + \alpha (x_j(t)-x_k(t)) \\ & =&\nonumber (1-\beta) x_k(t) + \beta \,\tfrac{x_j(t)+x_k(t)}{2} \\
\label{eq:classical-gossip}	x_\ell(t+1) & = & x_\ell(t) \quad \text{ for all } \ell\notin \{ j,k\} \, ,
\end{eqnarray}
where $\alpha \in (0,1)$ to have meaningful results\footnote{That is, the new states are an interpolation and not extrapolation between $x_j(t)$ and $x_k(t)$.}, and $\beta = 2\alpha$. An alternative viewpoint on this behavior is that the interacting agents take a weighted average between two discrete operations: [keep your state] and [swap your state]; namely
\begin{eqnarray}
\nonumber	\left( x_j(t+1),\;x_k(t+1) \right) & = & (1-\alpha)\, \left( x_j(t),\;x_k(t) \right)\\&& + \;\alpha \, \left( x_k(t),\;x_j(t) \right) \label{eq:classical-gossipswap}\\
	x_\ell(t+1) & = & x_\ell(t) \quad \text{ for all } \ell \notin \{j,k\} \; \nonumber.
\end{eqnarray}
This latter viewpoint turns out to have a natural quantum counterpart. Working with neighborhoods, one could also apply multi-agent permutations, e.g.~
\[\begin{split}\left( x_j,\;x_k,\;x_l \right)_{t+1} = &(1-\alpha-\beta)\, \left( x_j,\;x_k,\;x_l \right)_t + \alpha \, \left( x_k,\;x_l,\;x_j \right)_t \\&+ \beta \, \left( x_l,\;x_j,\;x_k \right)_t 
\; .\end{split}\]
The following result (see e.g.~\cite{BoydGossip}) characterizes convergence to consensus with the gossip algorithm. While it is a known result, we nonetheless provide a proof that will be useful to our aim, i.e. proving the convergence of the quantum gossip algorithm, and makes our presentation more self-contained.
\begin{prop}  \label{prop:classicalgossip}
Consider $G(V,E)$ an undirected graph that is connected, i.e.~for any pair of vertices $a,b \in V$, there exists a sequence of vertices $v_0=a,v_1,v_2,...,v_{n-1},v_n=b$ such that $(v_{k-1},v_k) \in E$ for all $k=0,1,...,n$. If one step of the classical gossip algorithm \eqref{eq:classical-gossip} is applied at each time, selecting the updated edge by cyclically running through all the edges of $G(V,E)$, then the system exponentially converges to average consensus. Moreover, if the updated edge $(j,k)$ is selected randomly according to a fixed probability distribution $\{ q_{j,k} \}$, with all $q_{j,k} > 0$, then asymptotic average consensus is ensured with probability one, in the sense that: for any $\delta,\varepsilon > 0$, there exists a time $T>0$ such that 
$$\mathbb{P}\left[\|x_k(T)-\bar{x})\|^2 > \varepsilon \|x_k(0)-\bar{x}\|^2\right] \; < \delta \, ,$$
where $\mathbb{P}$ denotes the probability measure induced by the randomization, $\|x\|^2=\sum_k\, x_k^T x_k$ and $\bar{x} = \tfrac{1}{m} \sum_k x_k(t) = \sum_k x_k(0)$ for any choice of the edges.
\end{prop}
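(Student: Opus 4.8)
The plan is to recast one gossip update as a linear, norm-nonincreasing map on $\R^{mn}$ and to track the \emph{disagreement} component of the state. Writing the swap-form \eqref{eq:classical-gossipswap} of the update for an edge $e=(j,k)$, the map is $x \mapsto W_e\, x$, where $W_e$ acts as the identity on all blocks except $j,k$, on which it applies $\left(\begin{smallmatrix} 1-\alpha & \alpha \\ \alpha & 1-\alpha \end{smallmatrix}\right)\otimes I_n$. This $2\times 2$ block has eigenvalue $1$ on the symmetric direction $x_j+x_k$ and eigenvalue $1-2\alpha$ with $|1-2\alpha|<1$ on the antisymmetric direction $x_j-x_k$. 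Hence each $W_e$ is symmetric and block-doubly-stochastic, it preserves the average $\bar x = \frac1m\sum_k x_k$, it fixes the consensus subspace $\Cons$ of \eqref{eq:defconsclass} pointwise, and it leaves invariant the orthogonal complement $\Cons^\perp = \{x : \sum_k x_k = 0\}$. On $\Cons^\perp$ each $W_e$ is nonexpansive, $\|W_e\, y\| \le \|y\|$, with equality if and only if $x_j = x_k$. Since the average is preserved, the projection $\Pi_{\Cons}\, x(t)$ is constant, and the distance of $x(t)$ to average consensus equals $\|y(t)\|$ for the disagreement vector $y(t) = x(t) - \Pi_{\Cons}\, x(0) \in \Cons^\perp$, which evolves by $y(t+1) = W_{e(t)}\, y(t)$.

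For the cyclic schedule, let $W_C = W_{e_L}\cdots W_{e_1}$ be the composition over one full sweep of all $L$ edges. Each factor is nonexpansive on the finite-dimensional space $\Cons^\perp$, so $\|W_C|_{\Cons^\perp}\| \le 1$; the key step is to show that this norm is \emph{strictly} less than $1$. By compactness of the unit sphere of $\Cons^\perp$ the operator norm is attained; if it equalled $1$ at some unit vector $y$, then $\|W_{e_i}\cdots W_{e_1}\, y\| = \|y\|$ for every $i$, forcing equality in each nonexpansive step. By the equality condition this means $y$ already agrees across edge $e_i$ at stage $i$ and $W_{e_i}$ fixes it, so in fact $x_{j_i} = x_{k_i}$ for every edge $e_i = (j_i,k_i)$; connectivity of $G(V,E)$ then propagates equality to all coordinates, i.e.\ $y \in \Cons$, contradicting $0 \neq y \in \Cons^\perp$. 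Therefore $c := \|W_C|_{\Cons^\perp}\| < 1$, and $\|y(NL)\| \le c^N \|y(0)\|$ yields exponential convergence to average consensus.

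For the random schedule with all $q_{j,k}>0$, I would upgrade this contraction probabilistically. Partition time into consecutive blocks of length $L$. The edges drawn within a block are i.i.d., so the event that a block reproduces a fixed full-sweep ordering $e_1,\dots,e_L$ has a constant probability $p_0 = \prod_i q_{e_i} > 0$, independently across blocks; on this event the disagreement contracts by at least the factor $c$, while on its complement every step is merely nonexpansive. Letting $M_N$ be the number of the first $N$ blocks on which the full-sweep event occurs, we get $\|y(NL)\| \le c^{M_N}\|y(0)\|$, where $M_N$ stochastically dominates a $\mathrm{Binomial}(N,p_0)$ variable. Given $\varepsilon,\delta>0$, choose $M^\star$ with $c^{2M^\star}\le \varepsilon$ and then $N$ large enough that a Chernoff bound gives $\PP[M_N < M^\star] < \delta$; setting $T=NL$ and reading $\|x_k(T)-\bar x\|^2$ in the stated aggregate norm $\|x\|^2=\sum_k x_k^T x_k$ as $\|y(T)\|^2$, we obtain $\PP[\|y(T)\|^2 > \varepsilon\|y(0)\|^2] \le \PP[M_N < M^\star] < \delta$, which is the claimed convergence with probability one.

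The main obstacle is the strict-contraction step for $W_C$: showing that norm preservation over a full sweep, combined with connectivity, forces the vector into the consensus subspace. Everything else --- the nonexpansiveness of single updates, preservation of the average, and the Chernoff estimate on the number of successful blocks --- is routine once this is in place.
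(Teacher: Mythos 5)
Your proof is correct and follows essentially the same route as the paper's: your disagreement norm $\|y\|^2$ is exactly the paper's Lyapunov function $W=\sum_k\|x_k-\bar x\|^2$, your strict contraction of the full-sweep map plays the role of the paper's cyclic-convergence step, and your block/Chernoff argument mirrors the paper's bound on the probability of never drawing enough successive cyclic sweeps. The only differences are in bookkeeping: you make the strict-contraction step explicit via the equality-propagation/connectivity argument (which the paper compresses into ``a Lyapunov argument on $W$'' plus linearity implying exponential convergence), and you use a Chernoff bound where the paper uses the elementary $(1-p)^{b}\to 0$ estimate.
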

\proof
We denote $x^T x = \Vert x \Vert^2$ for short and $\#E$ the number of edges in $G(V,E)$. At any step of the gossip algorithm, $W:= \tfrac{1}{2m} \sum_{k,j} \Vert x_k-x_j \Vert^2 = \sum_k  \Vert x_k-\bar{x} \Vert^2$ can only remain unchanged (if the two nodes of the selected edge have the same value) or decrease (as soon as an edge with different node values is selected). A Lyapunov argument on $W$ then shows that the system must asymptotically converge to average consensus when the edges of a connected graph are selected in a cyclic way. Since the map associated to one full cycle of edge selections is linear and time-invariant, this convergence is exponential. 
For such convergence to be possible, there must exist some $\lambda>0$ and integer $M>0$ such that $W(T) \leq W(0) \lambda$ if the edge choice between $t=0$ and $t=T = M \#E$ corresponds to $M$ cycles of gossip iteration. When edges are selected randomly, any particular sequence of $b$ consecutive edge selections has a probability greater than $ \bar{q}^b > 0$ to appear at least once during any time interval of length at least $b$, where $\bar{q} = \min_{(j,k) \in E} q_{j,k}$. In particular, if we target $W(T) < \varepsilon W(0) = \lambda^r W(0)$, we can say that there is a probability at least $\bar{q}^{rM \#E}$ to select $r$ times a succession of $M$ cyclic interactions between $t=t_0$ and $t=t_0+rM\#E$. If this happens once, any preceding or following edge choice can only improve $W$ (because of our first statement in this proof). We conclude by noting that over a time interval $brM \#E$, there is then a probability $< (1-\bar{q}^{rM\#E})^b$ to have never selected $r$ times a succession of $M$ cyclic interactions, and thus potentially miss $W(T) < \varepsilon W(0)$; the probability that this happens can be made arbitrarily small by taking $b$ (thus $T$) sufficiently large.
\qed


\subsection{Quantum Gossip Interactions}

Let us introduce a way to implement gossip-type interactions in a fully quantum way. In a controlled quantum network, one can typically engineer unitary transformations that implement the ``identity" evolution and the swapping of two neighboring subsystem states; let us denote the latter operator by $U_{(j,k)}$ for swapping subsystems $j$ and $k$.
By conditionally associating these two operations to orthogonal states $\q{\xi_I}$ and $\q{\xi_S}$ of an ancillary quantum two-level system and starting that auxiliary system in the state $\rho_{\xi} = (1-\alpha)\, \q{\xi_I}\qd{\xi_I} + \alpha \q{\xi_S}\qd{\xi_S}$, the joint state would evolve as:
$$ \rho \otimes \rho_\xi \, \rightarrow \, (1-\alpha)\, \rho \otimes \q{\xi_I}\qd{\xi_I} + \alpha\, U_{(j,k)} \rho U_{(j,k)}^\dagger \otimes \q{\xi_S}\qd{\xi_S} \, .$$ 
{\em The evolution of the composite system is thus described by a single unitary transformation.\footnote{The unitary transformation being given by $I\otimes \q{\xi_I}\qd{\xi_I}+ U_{(j,k)} \otimes \q{\xi_S}\qd{\xi_S}$.}} Taking the partial trace over the ancillary system, we obtain as evolution for the quantum network a quantum gossip interaction implementing the quantum channel:
\begin{equation}\label{alg:gossip}
	\rho(t+1) = {\cal E}_{j,k}(\rho(t))=(1-\alpha)\, \rho(t) + \alpha \, U_{(j,k)}\rho(t)U_{(j,k)}^\dag  \; ,
\end{equation}
with $\alpha \in (0,1).$ The optimal quasi-local mixing is obtained with $\alpha = 1/2$. 

To develop our analysis, it will be convenient to introduce the {\em graph ${\cal G}$ associated to the multipartite system}: its nodes $1,\ldots,m$ correspond to the ``physical'' subsystems, the edge $(j,k)$ is included if the subsystems $j$ and $k$ have a non-zero probability to interact.

  
\subsection{Convergence to Consensus}

We study convergence under three types of gossip dynamics: cyclic interaction, expectation of random interaction, and trajectory-wise for the random interaction. In all these cases, quantum gossip can be described by unital CPTP maps. We begin by recalling a characterization of the fixed points of such maps (see e.g.~\cite{viola-IPSlong}).

\begin{prop}\label{fixedpoint}
Let $\{V_i\}_{i=1}^K$ the Kraus decomposition of a {\em unital} CP map $\Ei(\cdot)$ and define:
\begin{equation}
\label{ }
\mathcal{A}_{\Ei}=\{X \in \mathfrak{B}(\Hi^m)\,\vert\,[X, V_i]=0~\,\forall\, i=1,\dots,K\}\, .
\end{equation}
Then $\bar{X} \in \mathfrak{B}(\Hi^m)$ is a fixed point of $\Ei$, i.e. $\Ei(\bar{X})=\bar{X}$, if and only if $\bar{X} \in \mathcal{A}_{\Ei}$. \hfill $\square$
\end{prop}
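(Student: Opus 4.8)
The plan is to prove the two inclusions $\mathcal{A}_{\Ei}\subseteq \mathrm{Fix}(\Ei)$ and $\mathrm{Fix}(\Ei)\subseteq\mathcal{A}_{\Ei}$ separately, the first being a one-line computation and the second the substantive part. Throughout I will use that the gossip channels of interest are not merely unital but also trace-preserving (being mixtures of unitaries, cf.~\eqref{alg:gossip}), so that both $\sum_i V_i V_i^\dagger = I$ and $\sum_i V_i^\dagger V_i = I$ hold; equivalently, the maximally mixed state is a faithful invariant state, which is what makes the fixed-point set coincide with the commutant.

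For the easy inclusion, suppose $X\in\mathcal{A}_{\Ei}$, i.e.~$[X,V_i]=0$ for all $i$, so that $V_i X = X V_i$. Then
\[\Ei(X)=\sum_i V_i X V_i^\dagger = X\sum_i V_i V_i^\dagger = X,\]
where the last equality is exactly unitality. Hence every element commuting with all $V_i$ is fixed.

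For the reverse inclusion I first reduce to the self-adjoint case. Since Kraus maps preserve adjoints, $\Ei(\bar X^\dagger)=\Ei(\bar X)^\dagger$, so if $\bar X$ is fixed then so is $\bar X^\dagger$, and the self-adjoint parts $\tfrac12(\bar X+\bar X^\dagger)$ and $\tfrac1{2i}(\bar X-\bar X^\dagger)$ are separately fixed; it thus suffices to treat $\bar X=\bar X^\dagger$. The key lemma is that such a fixed point also satisfies $\Ei(\bar X^2)=\bar X^2$. This follows from the Kadison--Schwarz inequality for unital CP maps, $\Ei(\bar X^2)\ge \Ei(\bar X)^2=\bar X^2$, combined with trace preservation: the operator $\Ei(\bar X^2)-\bar X^2$ is positive semidefinite and has vanishing trace, hence is zero. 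With both $\Ei(\bar X)=\bar X$ and $\Ei(\bar X^2)=\bar X^2$ available, I expand the sum of commutator products and collect terms using unitality for $\sum_i V_i V_i^\dagger$ and the two fixed-point identities, obtaining
\[\sum_i [\bar X,V_i]\,[\bar X,V_i]^\dagger = \bar X^2 - \bar X^2 - \bar X^2 + \bar X^2 = 0.\]
As each summand $[\bar X,V_i][\bar X,V_i]^\dagger$ is positive semidefinite, the vanishing of the sum forces each term, and therefore each commutator $[\bar X,V_i]$, to be zero; thus $\bar X\in\mathcal{A}_{\Ei}$.

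I expect the main obstacle to be the lemma $\Ei(\bar X^2)=\bar X^2$: it is the only step that genuinely uses \emph{both} halves of bistochasticity (unitality through the Schwarz-type positivity, trace preservation through the zero-trace argument) and is where the noncommutative structure really enters. Everything else is formal algebra once this ``multiplicativity on the fixed-point set'' is in place. Since the statement is standard (see \cite{viola-IPSlong}), one could alternatively simply invoke it, but the self-contained commutator identity above is short enough to record directly.
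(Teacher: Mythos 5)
Your proof is correct, and it is worth noting at the outset that the paper itself offers \emph{no} proof of this proposition: it is recalled from the cited literature on information-preserving structures (hence the box closing the statement). Your derivation is essentially the standard argument from that literature --- reduction to self-adjoint fixed points, multiplicativity on the fixed-point set via the Schwarz inequality plus trace preservation, and then the vanishing sum of positive commutator products --- so what you add is self-containedness rather than a competing method; every step checks out.

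Two substantive remarks. First, your insistence on adding trace preservation to unitality is not a cosmetic strengthening: the proposition as literally stated (for a merely unital CP map) is \emph{false}. On $\Hi=\C^3$ take $V_1=\q{0}\qd{0}+\q{1}\qd{1}$ and $V_2=\q{2}\qd{1}$; then $V_1V_1^\dagger+V_2V_2^\dagger=I$ (unital) while $V_1^\dagger V_1+V_2^\dagger V_2=\q{0}\qd{0}+2\,\q{1}\qd{1}\neq I$ (not TP), and the operator $\bar X=\q{1}\qd{0}$ satisfies $\Ei(\bar X)=V_1\bar X V_1^\dagger+V_2\bar X V_2^\dagger=\bar X$ even though $[\bar X,V_2]=-\q{2}\qd{0}\neq 0$; so the fixed-point set strictly contains $\mathcal{A}_{\Ei}$. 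The statement is rescued exactly by your observation that every map used in the paper --- the gossip channel \eqref{alg:gossip} and its cyclic, synchronous and expected variants --- is a mixture of unitaries, hence unital \emph{and} trace-preserving (indeed with self-adjoint Kraus operators). Second, a small streamlining of your ``main obstacle'': the operator identity $\Ei(\bar X^2)=\bar X^2$, and with it Kadison--Schwarz, can be bypassed entirely. Since each $[\bar X,V_i][\bar X,V_i]^\dagger$ is positive semidefinite, it suffices that the sum have zero trace; expanding and using cyclicity gives $\trace\bigl(\sum_i[\bar X,V_i][\bar X,V_i]^\dagger\bigr)=\trace(\bar X\,\Ei(I)\,\bar X)-2\,\trace(\bar X\,\Ei(\bar X))+\trace(\Ei(\bar X^2))$, which equals $\trace(\bar X^2)-2\,\trace(\bar X^2)+\trace(\bar X^2)=0$ by unitality, the fixed-point property, and trace preservation respectively. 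This makes the proof purely algebraic, with each of the two ``stochasticity'' hypotheses entering exactly once.
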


This helps determine the set of fixed points for the CP maps of interest in quantum gossip.

\begin{lemma}\label{perm}
Let $U_{(j,k)}$ denote the pairwise swap operation of subsystems $(j,k)$ on $\Hi^m$. If the graph ${\cal G}$ associated to the system is connected, then the set of fixed points of any CP unital map of the form
\begin{equation}\begin{split}
\label{mapL}
&\Ei(X) = q_0 \, X  + \sum_{(j,k) \in E}\, q_{j,k} \, U_{(j,k)} X U_{(j,k)}^\dagger \;\;,\\& \text{ with } q_0 + \sum q_{j,k}=1\, , \;\;\; q_0,\,\{q_{j,k}\}>0
\end{split}\end{equation}
coincides with the set of permutation-invariant operators.
\end{lemma}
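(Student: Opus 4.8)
The plan is to combine the commutant characterization of fixed points from Proposition \ref{fixedpoint} with a purely group-theoretic fact about swaps indexed by a connected graph.

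First I would exhibit an operator-sum representation of the random-unitary map \eqref{mapL}, namely $V_0 = \sqrt{q_0}\,I$ together with $V_{(j,k)} = \sqrt{q_{j,k}}\,U_{(j,k)}$ for each $(j,k)\in E$. Since every weight $q_0,q_{j,k}$ is strictly positive, Proposition \ref{fixedpoint} identifies the fixed-point set of $\Ei$ with the commutant $\mathcal{A}_{\Ei}=\{X\in\mathfrak{B}(\Hi^m):[X,U_{(j,k)}]=0\text{ for all }(j,k)\in E\}$, the constraint $[X,I]=0$ being vacuous and the positive scalars $\sqrt{q_{j,k}}$ being irrelevant to the commutation conditions. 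The Lemma is therefore equivalent to the statement that $X$ commutes with every edge-swap $U_{(j,k)}$, $(j,k)\in E$, if and only if it commutes with every permutation operator $U_\pi$. One direction is trivial, as each edge-swap is itself a permutation operator.

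For the converse I would use that $\pi\mapsto U_\pi$ is a unitary representation of the symmetric group $S_m$, so that for a fixed $X$ the set $H=\{\pi\in S_m:[X,U_\pi]=0\}$ is a subgroup: it contains the identity, it is closed under inverses since $[X,U_\pi]=0$ gives $[X,U_\pi^{-1}]=0$, and it is closed under products via $[X,U_\pi U_\sigma]=U_\pi[X,U_\sigma]+[X,U_\pi]U_\sigma$ together with $U_\pi U_\sigma=U_{\pi\sigma}$. By hypothesis $H$ contains all the transpositions $(j,k)$ with $(j,k)\in E$, so it remains only to show that these transpositions generate all of $S_m$, which forces $H=S_m$ and hence permutation invariance of $X$.

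The crux, and the step I expect to be the only real obstacle, is this generation claim: the transpositions indexed by the edges of a connected graph generate the full symmetric group. I would prove it by showing that an arbitrary transposition $(a,b)$ lies in the subgroup generated by the edge-transpositions. Connectedness of $\mathcal{G}$ gives a path $a=v_0,v_1,\dots,v_\ell=b$ with each $(v_{i-1},v_i)\in E$, and one then builds $(v_0,v_\ell)$ by induction on $i$ using the conjugation identity $(v_i,v_{i+1})\,(v_0,v_i)\,(v_i,v_{i+1})=(v_0,v_{i+1})$, which promotes a transposition reaching $v_i$ to one reaching $v_{i+1}$; the base case $(v_0,v_1)$ is an edge. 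Since the full set of transpositions generates $S_m$, so do the edge-transpositions. This is the quantum counterpart of the classical remark recalled after \eqref{eq:defconsclass} --- that invariance under pairwise permutations already entails invariance under all permutations --- the genuinely new ingredient being the reduction from all transpositions to those supported on a connected graph.
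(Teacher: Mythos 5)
Your proposal is correct and follows essentially the same route as the paper: invoke Proposition \ref{fixedpoint} on the operator-sum representation $\{\sqrt{q_0}\,I,\ \sqrt{q_{j,k}}\,U_{(j,k)}\}$ to reduce the lemma to showing that commuting with the edge-swaps is equivalent to commuting with all of $\mathfrak{P}$, and then use the fact that transpositions along the edges of a connected graph generate the full symmetric group. The only difference is that the paper cites this generation fact (and the passage from generators to the whole group) as well known, whereas you prove both explicitly via the subgroup $H=\{\pi:[X,U_\pi]=0\}$ and the conjugation identity along a path -- a welcome filling-in of detail, not a different argument.
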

\proof According to Proposition \ref{fixedpoint} above, the fixed points are the $X$ satisfying $X U_{(j,k)} = U_{(j,k)} X$, or equivalently $U_{(j,k)}^\dagger X U_{(j,k)} = X$. The latter expresses that $X$ is invariant with respect to pairwise swaps on all the graph edges. It is well known that sequences of pairwise swaps on the edges of a connected graph generate the full set of permutations on the set of nodes, and so we get the conclusion.
\qed

The following lemma shows how the contribution of the identity, i.e. the trivial permutation, in the CP map plays a crucial role in the proof of convergence.
\begin{lemma}\label{lem:nocycles}
	 If a CP map $\Ei$ admits an OSR with a term $V_1=\sqrt{\alpha}\, I$ with $\alpha>0$, then viewing it as a linear map on ${\mathfrak B}(\Hi^m)$ its only modulus-one eigenvalue can be one.  
\end{lemma}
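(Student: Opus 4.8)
The plan is to exploit the split induced by the special Kraus term. Writing $\Ei(X)=\alpha X+\Phi(X)$ with $\Phi(X)=\sum_{i\ge 2}V_iXV_i^\dagger$, the remainder $\Phi$ is again completely positive, and every eigen-operator of $\Ei$ is an eigen-operator of $\Phi$: if $\Ei(X)=\lambda X$ then $\Phi(X)=(\lambda-\alpha)X$. First I would record that, since $\Ei$ is unital (as is the case for the gossip maps \eqref{mapL}), one has $\Phi(I)=\Ei(I)-\alpha I=(1-\alpha)I$, so that in the nontrivial case $\alpha\in(0,1)$ the normalized map $\Psi:=\tfrac{1}{1-\alpha}\Phi$ is a \emph{unital} completely positive map; the degenerate case $\alpha=1$ forces $\Ei=\mathrm{id}$, whose only eigenvalue is $1$. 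Correspondingly, if $\lambda$ is an eigenvalue of $\Ei$ then $\mu:=\tfrac{\lambda-\alpha}{1-\alpha}$ is an eigenvalue of $\Psi$, and $\lambda=\alpha+(1-\alpha)\mu$.

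The key step is to bound the spectrum of $\Psi$. Since $\Psi$ is unital and positive, it is a contraction for the operator norm on $\mathfrak{B}(\Hi^m)$; equivalently, in the gossip setting $\Psi$ is a convex combination of the unitary conjugations $X\mapsto U_kXU_k^\dagger$, each of which preserves the Hilbert--Schmidt norm, so $\Psi$ is a contraction on that Hilbert space. In either reading the spectral radius of $\Psi$ is at most one, hence every eigenvalue $\mu$ satisfies $|\mu|\le 1$.

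It then remains to turn the bound $|\mu|\le 1$ into the desired conclusion through the affine relation $\lambda=\alpha+(1-\alpha)\mu$. This map sends the closed unit disk onto the closed disk of radius $1-\alpha$ centered at $\alpha$; because $\alpha\in(0,1)$, this image disk lies inside the closed unit disk and touches the unit circle in exactly one point. Indeed, imposing $|\lambda|=1$ on $\lambda=e^{i\theta}$ together with $|\lambda-\alpha|=1-\alpha$ forces $\cos\theta=1$, that is $\lambda=1$ (attained only for $\mu=1$). Consequently any eigenvalue of $\Ei$ of modulus one must equal $1$, which is the assertion.

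I expect the genuine content to sit in the spectral-bound step: it is there that complete positivity and unitality of the full map $\Ei$ are used, precisely to guarantee $\Phi(I)=(1-\alpha)I$ and hence a legitimate unital $\Psi$ with spectrum in the closed unit disk. Once $|\mu|\le 1$ is secured, the role of the identity term $\sqrt{\alpha}\,I$ is purely geometric---it shifts the admissible spectral disk off the unit circle except at the single point $1$---and the remainder of the argument is an elementary computation.
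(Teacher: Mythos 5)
Your proof is correct and its skeleton is the same as the paper's: split off the identity Kraus term to write $\Ei = \alpha\,\mathrm{id} + (1-\alpha)\Psi$, show that the normalized remainder $\Psi$ has spectral radius at most one, and conclude via the affine relation $\lambda = \alpha + (1-\alpha)\mu$, whose image of the closed unit disk touches the unit circle only at $1$. The one substantive difference is how you obtain the spectral bound on $\Psi$. The paper uses trace preservation: since the OSR in \eqref{eq:Kraus} carries the normalization $\sum_k V_k^\dagger V_k = I$, one gets $\sum_{i\geq 2} V_i^\dagger V_i = (1-\alpha) I$, so $\mathcal{F} = \frac{1}{1-\alpha}(\Ei - \alpha\,\mathrm{id})$ is again CPTP and hence a trace-norm contraction with spectrum in the closed unit disk. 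You instead invoke unitality ($\Phi(I) = (1-\alpha)I$, then the fact that unital positive maps are operator-norm contractions), or alternatively the random-unitary structure of the gossip maps for a Hilbert--Schmidt contraction argument. Both of those arguments are valid, but note that unitality is \emph{not} among the lemma's hypotheses: in the paper's conventions an OSR already encodes trace preservation, not unitality, and a CPTP map with an identity Kraus term need not be unital once there are three or more Kraus operators (e.g.\ on $\C^2$, take $V_1=\tfrac{1}{\sqrt 2}I$, $V_2=\tfrac{1}{\sqrt 2}\q{0}\qd{0}$, $V_3=\tfrac{1}{\sqrt 2}\q{0}\qd{1}$, which is trace preserving but maps $I$ to $\tfrac12 I + \q{0}\qd{0}$). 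So as written your proof establishes the lemma for unital CP maps --- which does cover the maps \eqref{mapL}, since random-unitary channels are both TP and unital, and is therefore enough for the application in Theorem \ref{thm:convergence} --- but not verbatim under the stated hypotheses. The repair is one line and lands exactly on the paper's argument: normalize using trace preservation rather than unitality, i.e.\ $\sum_{i\geq 2}V_i^\dagger V_i = (1-\alpha)I$ makes $\Psi$ CPTP and hence a trace-norm contraction; the rest of your proof, including the degenerate case $\alpha=1$ and the tangency computation, goes through unchanged.
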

\proof If $\Ei$ is a CPTP map it is a contraction in trace norm \cite{nielsen-chuang,alicki-lendi}, so its eigenvalues $\lambda_k$ belong to the closed unit disk. By virtue of the Kraus-Stinespring representation theorem (see e.g. \cite{kraus}), also ${\cal F} = \frac{1}{1-\alpha} (\Ei - \alpha I)$ is CPTP and thus has eigenvalues $\mu_k$ in the closed unit disk. Therefore the eigenvalues $\lambda_k = (1-\alpha) \mu_k + \alpha$ of $\Ei = (1-\alpha) {\cal F} + \alpha I$ in fact belong to the circle of radius $(1-\alpha)$ centered at $\alpha$, which is strictly inside the unit circle except for a tangency point at $1 \in \C$, see Fig.~\ref{fig:egv}. \qed
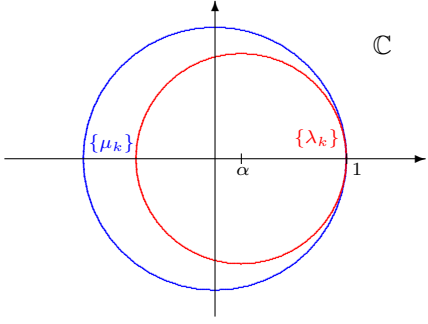
\begin{figure}
\begin{center}
	\setlength{\unitlength}{0.7mm}
	\begin{picture}(80,60)(0,5)
		\put(40,35){\color{blue}\bigcircle{50}}
		\put(45,35){\color{red}\bigcircle{40}}
		\put(55,38){\scriptsize\color{red}$\{ \lambda_k \}$}
		\put(16,37){\scriptsize\color{blue}$\{ \mu_k \}$}
		\put(0,35){\vector(1,0){80}}
		\put(40,5){\vector(0,1){60}}
		\put(70,55){$\C$}
		\put(65,34){\line(0,1){2}}
		\put(66,32){\scriptsize$1$}
		\put(45,34){\line(0,1){2}}
		\put(44,32){\scriptsize$\alpha$}
	\end{picture}
\end{center}
	\caption{Delimitation of the (closed) domains for the eigenvalues $\{ \mu_k \}$ of $\cal F$ (blue) and $\{ \lambda_k \}$ of $\Ei$ (red) [color online].}\label{fig:egv}
\end{figure}
In other words, Lemma \ref{lem:nocycles} excludes eigenvalues of unit norm different from $+1$, which are the ones that would cause limit cycles.

By combining the above properties, we get the following convergence result for quantum gossip. It shows that $S$-average SSC can be attained for global operators that are the permutation-invariant average of local ones; this is similar to classical gossip, where distributed computation of the average of individual states actually gives access to the value of any linear permutation-invariant function of these states.
\begin{thm} \label{thm:convergence}
If the graph associated to possible interactions is connected, then the quantum gossip algorithm \eqref{alg:gossip} ensures global convergence towards SSC:
\newline - deterministically, when the edges on which a gossip interaction occurs at a given time are selected by periodically cycling, in any predefined way, through the set of edges;
\newline - in expectation, when the edges on which a gossip interaction occurs at a given time are selected randomly from a fixed probability distribution $\{ q_{j,k} > 0 \vert \sum_{(j,k)\in E} q_{j,k} = 1 \}$;
\newline - with probability one on any trajectory, with the same edge-selection strategy of the previous point. Explicitly, there exists a state $\rho_* \in \mathcal{C}_{SSC}$ for which for any $\delta,\varepsilon > 0$, there exists a time $T>0$ such that
$$\mathbb{P}[\, \trace((\rho(T)-\rho_*)^2) \,  > \varepsilon \,] \; < \delta \, .$$
In any of the cases above, the state towards which the system converges is:
\begin{equation}
\label{asymptoLS}
\rho_* = \frac{1}{m!}\, \sum_{\pi \in \mathfrak{P}} \, U_{\pi}\rho_0U_{\pi}^{\dagger}\, .
\end{equation}
Furthermore, $S$-average SSC is attained if and only if $S\in\mathfrak{H}(\Hi^{\otimes m})$ can be written, for some $\sigma\in{\mathfrak H}(\Hi),$ in the form:
\begin{equation}
\label{sumsim}
S=\frac{1}{m}\sum_{i}^m\sigma^{(i)}.
\end{equation}
\end{thm}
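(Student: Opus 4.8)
The plan is to handle the three convergence modes through one fixed-point/spectral analysis and then read off both the limit state and the $S$-average characterization. The key preliminary observation I would record is that each elementary gossip channel $\Ei_{j,k}$ in \eqref{alg:gossip} is unital, trace-preserving, and in fact \emph{self-dual} for the Hilbert--Schmidt inner product $\trace(X^\dagger Y)$: since a swap is an involution, $U_{(j,k)}^\dagger=U_{(j,k)}$ and hence $\Ei_{j,k}^\dagger=\Ei_{j,k}$. This yields a \emph{conservation law} that I will use repeatedly: for every permutation-invariant observable $S$ one has $\Ei_{j,k}(S)=S$, so $\trace(S\,\Ei_{j,k}(\rho))=\trace(\Ei_{j,k}(S)\,\rho)=\trace(S\rho)$; thus the expectation of any permutation-invariant observable is preserved by every gossip step, and therefore by every composition or convex combination of such steps.

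For the \textbf{expectation/synchronous} dynamics the one-step map is the time-independent $\Ei(\rho)=(1-\alpha)\rho+\alpha\sum_{(j,k)}q_{j,k}\,U_{(j,k)}\rho\,U_{(j,k)}^\dagger$, which has the form \eqref{mapL} with a strictly positive identity weight. By Lemma \ref{perm} its fixed points are exactly the permutation-invariant operators, and by Lemma \ref{lem:nocycles} its only peripheral eigenvalue is $1$ (semisimple, as for any trace-norm contraction), so $\Ei^t$ converges to the spectral projection onto the permutation-invariant operators. To identify this limit explicitly, I would note that $\rho_*=\tfrac{1}{m!}\sum_\pi U_\pi\rho_0 U_\pi^\dagger$ is permutation-invariant and, by the conservation law applied to the orthogonal symmetrizing projector, agrees with $\rho_0$ on every permutation-invariant observable; since the Hilbert--Schmidt form is nondegenerate on the permutation-invariant subspace, these two facts pin down $\rho_*$ uniquely, giving $\lim_t\Ei^t(\rho_0)=\rho_*$. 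The \textbf{cyclic} case uses the same template on $\Ei_C=\Ei_{{\cal N}_T}\circ\cdots\circ\Ei_{{\cal N}_1}$: expanding the product shows $\Ei_C$ still carries a Kraus term $(1-\alpha)^{T/2}I$, so Lemma \ref{lem:nocycles} again forbids limit cycles, while its fixed points are characterized by observing that each factor is a self-adjoint contraction with spectrum in $(-1,1]$, so $\Ei_C(X)=X$ forces the Hilbert--Schmidt norm to be preserved at each factor, which in turn forces $X$ to commute successively with every swap of the cycle, hence (connectivity) to be permutation-invariant. The conservation law again pins the limit to $\rho_*$, and since the finitely many partial compositions within a cycle fix $\rho_*$, the full trajectory converges to $\rho_*$ as well.

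For \textbf{trajectory-wise} convergence I would mimic the Lyapunov argument of Proposition \ref{prop:classicalgossip} using the purity $\trace(\rho^2)$. A direct expansion gives, after a step on edge $(j,k)$, $\trace((\rho^+)^2)=\trace(\rho^2)-2\alpha(1-\alpha)\big(\trace(\rho^2)-\trace(\rho\,U_{(j,k)}\rho\,U_{(j,k)}^\dagger)\big)$, and Cauchy--Schwarz shows the bracket is nonnegative, vanishing exactly when $\rho$ commutes with $U_{(j,k)}$. Because $\trace(\rho_*^2)$ is conserved and $\trace((\rho-\rho_*)^2)=\trace(\rho^2)-\trace(\rho_*^2)$ by orthogonality, this purity is the desired monotone Lyapunov function, decreasing strictly whenever the selected swap fails to commute with the current state. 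The probabilistic bookkeeping of Proposition \ref{prop:classicalgossip} then transfers essentially verbatim: a full cycle through all edges occurs with probability at least $\bar q^{\#E}$ in any sufficiently long window, and a compactness argument yields a uniform contraction of the distance to ${\cal C}_{\rm SSC}$ per successful block, from which the stated almost-sure bound $\PP[\trace((\rho(T)-\rho_*)^2)>\varepsilon]<\delta$ follows.

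Finally, for the $S$-average SSC \textbf{characterization} I would invoke Proposition \ref{dualcons}. Having shown $\hat\Ei_t\to P$ with $P$ the symmetrizing projector (self-adjoint), duality gives $\hat\Ei_t^\dagger\to P$ as well, so \eqref{dualavg} reduces to $S=P(S)$ together with $P(\sigma^{(\ell)})=S$. A short twirl computation, using $U_\pi\sigma^{(\ell)}U_\pi^\dagger=\sigma^{(\pi^{-1}(\ell))}$ and the fact that $\pi^{-1}(\ell)$ takes each index value $(m-1)!$ times, gives $P(\sigma^{(\ell)})=\tfrac1m\sum_i\sigma^{(i)}$, which is automatically permutation-invariant; hence the two dual conditions are simultaneously satisfiable for some $\sigma$ if and only if $S=\tfrac1m\sum_i\sigma^{(i)}$, establishing \eqref{sumsim}. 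I expect the main obstacle to be identifying the limit as \emph{exactly} $\rho_*$ in the cyclic case, where the relevant spectral projection is oblique rather than the symmetrizer itself; the conservation-law plus nondegeneracy argument is precisely what makes this identification uniform across all three dynamics, and a secondary source of care is transferring the compactness and probabilistic counting of Proposition \ref{prop:classicalgossip} to the operator setting to obtain the quantitative almost-sure estimate.
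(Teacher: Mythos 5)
Your proposal is correct and follows essentially the same route as the paper's proof: the conservation law from self-adjoint Kraus operators, Lemma \ref{perm} and Lemma \ref{lem:nocycles} for the fixed-point/spectral analysis, identification of the limit $\rho_*$ via agreement on permutation-invariant observables, a Frobenius-norm contraction combined with the probabilistic block argument of Proposition \ref{prop:classicalgossip} for trajectory-wise convergence, and Proposition \ref{dualcons} with the twirl computation for the $S$-average characterization. The only differences are cosmetic: you justify the cyclic map's fixed-point set by a norm-preservation argument through each self-adjoint factor with spectrum in $(-1,1]$ (the paper instead expands the cyclic OSR and notes that each single swap survives with positive weight, then invokes Lemma \ref{perm}), and you make the one-step contraction quantitative via the explicit purity-dissipation identity, whereas the paper argues it spectrally from self-adjointness and eigenvalues in the closed unit disk.
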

\proof
First notice that all the operators in the OSR of the map \eqref{mapL} are self-adjoint. This implies that permutation-invariant observables $S$ are fixed points for the associated dual map, and hence for the gossip iteration associated to any edge $(j,k)$ and $\forall\,\rho\,\,$:
\begin{equation}
\label{firstL}
U_{(j,k)} S U_{(j,k)}^\dagger = S \; \Rightarrow \;\trace[\Ei_{j,k}(\rho) S] = \trace[\rho \Ei_{j,k}^\dagger(S)] =\trace[\rho  S]  \;\; .
\end{equation}
 
For the {\em cyclic evolution} map $\Ei_C,$ we notice that all the simple two-subsystem swaps are still present with a weight different from zero in the OSR of the cyclic map \eqref{cyclic}, thanks to the presence of the identity in the OSR of each gossip interaction step. Therefore by Lemma~\ref{perm} the fixed points are the permutation-invariant operators. Now consider the dynamics associated to $\Ei_C$ as a linear, time-invariant map acting on the subspace of hermitian matrices. 
From Lemma~\ref{lem:nocycles} and the fact that the time-invariant linear map leaves ${\mathfrak D}(\Hi^m)$ invariant (excluding unstable Jordan blocks),
we have that all the {\em modes} of the LTI system are asymptotically stable except those corresponding to the fixed-point set, namely the permutation-invariant set: every initial state converges to a fixed point $\rho_\infty$ in this set. Thus the SSC set is globally asymptotically stable, and in fact exponentially stable since the map is linear. Let us now prove that $\rho_\infty$ has the form \eqref{asymptoLS}. For all {\em permutation invariant} $X$, from \eqref{firstL} we have that:
\begin{equation}
\label{ }
\trace[X\Ei_C(\rho_0)]=\trace[X \rho_0]\,\,\,\,\,\,\,\forall\,\,t.
\end{equation}
Combining the latter with the fact that $\rho_\infty$ is permutation-invariant, that the set of all permutations is self-adjoint, and using \eqref{permutinvL}, we get for arbitrary  $Q \in \mathfrak{H}(\Hi^m)$:
\begin{eqnarray}
\trace[Q\rho_\infty]&=&\trace[Q\frac{1}{m!}\sum_{\pi \in \mathfrak{P}}U_{\pi}\rho_\infty U_{\pi}^{\dagger}]
\nonumber\\
&=&\trace[\frac{1}{m!}\sum_{\pi \in \mathfrak{P}}U_{\pi}Q U_{\pi}^{\dagger}\rho_\infty]
\nonumber\\
&=&\trace[\frac{1}{m!}\sum_{\pi \in \mathfrak{P}}U_{\pi}Q U_{\pi}^{\dagger}\rho_0]\nonumber\\
&=&\trace[\frac{1}{m!}\sum_{\pi \in \mathfrak{P}}Q U_{\pi}\rho_0 U_{\pi}^{\dagger}] \nonumber .
\end{eqnarray}
This implies that indeed $\rho_\infty = \rho_*$ as described in the statement. 

For the {\em expectation of random evolution}, the CPTP map $\Ei$ is exactly of the form of Lemma \ref{perm} and the same reasoning can be repeated.

For the {\em random trajectory evolution}, we repeat a proof similar to that of Proposition \ref{prop:classicalgossip}. Since ${\cal E}$ for a single interaction is linear, \emph{self-adjoint}, with eigenvalues in the closed unit disk, it is a contraction for the Frobenius norm distance $\trace((\rho_A-\rho_B)^2)$ between any two states $\rho_A,\rho_B \in \mathfrak{D}(\Hi^m)$. Indeed, ${\cal E}$ has non-increasing orthonormal modes, so by writing any operator $X \in \mathfrak{H}(\Hi^m)$ in the modal basis we directly get $\trace(\Ei(X)^\dagger \,\Ei(X)) \leq \trace(X^\dagger X)$; taking $X=\rho_A-\rho_B$ yields the contraction. This is exactly analogous to the non-increasing Euclidean norm $x^T x = \Vert x \Vert^2$ under a classical consensus interaction with an \emph{undirected} graph, and the related contraction of $\Vert x_A-x_B \Vert^2$. Now taking in particular $\rho_A = \rho$ and $\rho_B = \rho_*$, we get that the Frobenius distance from $\rho$ to $\rho_*$ can never increase. Moreover, by transitivity of the permutation operators, $\frac{1}{m!}\, \sum_{\pi \in \mathfrak{P}} \, U_{\pi}\rho U_{\pi}^{\dagger} = \frac{1}{m!}\, \sum_{\pi \in \mathfrak{P}} \, U_{\pi}\rho_0 U_{\pi}^{\dagger}= \rho_*$ for any $\rho$ along the trajectory of the gossip algorithm.
Now given the convergence under cyclic evolution, there must exist some $\lambda<1$ and integer $M>0$ such that 
$$\trace((\Ei_C^M(\rho)-\rho_*)^2) \leq \lambda \trace((\rho-\rho_*)^2) \,$$
for any $\rho$ for which $\frac{1}{m!}\, \sum_{\pi \in \mathfrak{P}} \, U_{\pi}\rho U_{\pi}^{\dagger}  = \rho_*$. The proof then concludes along the same lines as Proposition \ref{prop:classicalgossip}, namely the probability to obtain an edge sequence which includes successions of $M$ cyclic evolutions a sufficiently large number of times to have $\varepsilon$-convergence, gets arbitrarily close to $1$ if we wait long enough. 

Finally let us prove that we attain $S$-average consensus if and only if $S$ can be decomposed as in $(\ref{sumsim})$. 
We know from the first part of the proof that all permutation-invariant observables $S$ are fixed points for the associated dual map $\Ei_t^\dag$.
Then according to Proposition $\ref{dualcons}$ we have $S$-average consensus if and only if there exists a local observable $\sigma$ such that:  
 \begin{equation}\label{ }
\quad \lim_{t \rightarrow \infty}\hat\Ei_t^\dag(\sigma^{(\ell)}) = S 
\end{equation}
for $\ell=1,2,\ldots, m.$ Because of \eqref{firstL} and \eqref{asymptoLS}, by duality we have that for every local operator $\sigma^{(\ell)}$:
\begin{equation}
\label{ }
 \lim_{t \rightarrow \infty}\hat\Ei_t^\dag(\sigma^{(\ell)}) =\frac{1}{m!}\sum_{\pi \in \mathfrak{P}}U_{\pi}^\dag\sigma^{(\ell)}U_{\pi}=\frac{1}{m}\sum_{i=1}^m\sigma^{(i)}.
\end{equation}
Therefore, we can achieve $S$-average consensus if and only if $S$ is a   permutation invariant operator that can be decomposed as in $(\ref{sumsim})$.
\qed

{\em Remark:} This shows that the mean value of a (global) observable $S=\frac{1}{m}\sum_\ell \sigma^{(\ell)}$, with arbitrary $\sigma$, can be asymptotically retrieved from the state of any single subsystem after having applied one of the quantum gossip algorithms. 

On the other hand, unlike for classical consensus, there are permutation-invariant operators that do not attain $S$-average consensus, because they cannot be written in the form \eqref{sumsim}. This is the case among others if $S$ is orthogonal to the linear span of all the local observables. For instance if~$\tilde S=\sigma_z^{\otimes^m}$, given the orthogonal basis $\{\sigma_k\}_{k=0,x,y,z}$ for ${\mathfrak B}(\Hi)$, we have:
\begin{displaymath}
\trace[\tilde S\, \sigma^{(\ell)}_k]=0\quad \forall \, k \in \{0,x,y,z\} \text{ and } \forall \, l \in \{1,\dots,m\}.
\end{displaymath}
Therefore $\tilde S$ cannot be written in the form \eqref{sumsim}, hence although $\tilde S$ is conserved by the gossip algorithm, the latter cannot lead to $\tilde S$-average consensus in the sense of Definition \ref{def:averageQcons}.

As already mentioned, the convergence speed for the random case can be quite low, and a faster map would be obtained by effectively taking a mixture of all possible updates at each time. This can be attained by suitably selecting edges through an auxiliary quantum system, acting as an independent quantum ``coin'' (in the jargon of quantum random walks) at each time. This would represent a fully quantum implementation of a \emph{synchronous consensus update}.


\subsection{Classical equivalent to observable consensus dynamics}

We next show how the quantum gossip algorithm \eqref{alg:gossip} in fact implements in a quantum fashion the classical gossip as we restrict to $\sigma$EC. According to Definition \ref{def:sigmaEC}, a quantum state $\rho$ belongs to ${\cal C}_{\sigma\text{EC}}$ if:
\begin{equation}
\label{ }
\trace[\sigma^{(1)}\rho]=\ldots =\trace[\sigma^{(m)}\rho].
\end{equation}
In view of this, it seems reasonable to attempt a convergence study of the algorithm \eqref{alg:gossip} directly in terms of the evolution of the expectation values of the $\sigma^{(\ell)}$ operators. This is not possible for arbitrary quantum evolutions, since a quantum state is far from fully specified by a {\em single} set of commuting observable expectations, and different states with the same expectation may lead to very different evolutions. However, our quantum gossip algorithm remarkably allows us to write a model for the average dynamics of the $\sigma^{(\ell)}$ in closed form. More precisely, let us define $
z_\ell(t):=\trace[\Ei^t(\rho_0)\sigma^{(\ell)}]=\trace[\rho_t \sigma^{(\ell)}] \;.$
Note that for one subsystem swap $U_{(j,k)}$, we have:
\begin{equation}
\label{eq:swapeffect}
\trace[\sigma^{(\ell)}U_{(j,k)}\, \rho \, U_{(j,k)}^{\dagger}]= \begin{cases} 
z_\ell & \text{if } \ell\notin \{j,k\} \\ 
z_k & \text{if } \ell=j \\
z_j & \text{if } \ell=k \, .
\end{cases}
\end{equation}
According to \eqref{eq:swapeffect} and \eqref{alg:gossip}, the random gossip algorithm update yields, with probability $q_{j,k}$, i.e.~when the edge $(j,k)$ is selected:
\begin{eqnarray*}
(z_j(t+1),z_k(t+1)) & \hspace{-3mm}= & \hspace{-2mm}(1-\alpha)(z_j(t),z_k(t))+\alpha(z_k(t),z_j(t)) \\
z_\ell(t+1) &\hspace{-3mm} = & \hspace{-2mm}z_\ell(t) \quad \text{ for all } \ell \notin \{j,k\} \, .
\end{eqnarray*}

This last expression has exactly the same form as the classical gossip algorithm \eqref{eq:classical-gossipswap}. Therefore, Proposition \ref{prop:classicalgossip} readily implies:
\begin{cor}
	Under all the various edge selection strategies for quantum consensus algorithm \eqref{alg:gossip}, the $z_\ell(t)$, $\ell = 1,2,...,m$ asymptotically converge towards the unique configuration:

$$ \lim_{t\rightarrow \infty}z_\ell(t) = \frac{1}{m} \sum_{k=1}^m z_k(0) \quad \text{ for all } \ell \in \{1,2,...,m\} \, .$$ \hfill $\square$	
\end{cor}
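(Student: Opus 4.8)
The plan is to reduce the statement entirely to the classical convergence result already established as Proposition \ref{prop:classicalgossip}, exploiting the fact that the preceding computation has shown the expectations $z_\ell(t)$ to obey a \emph{closed} recursion identical to classical gossip. Concretely, equation \eqref{eq:swapeffect} tells us that a single swap $U_{(j,k)}$ acts on the tuple $(z_1,\dots,z_m)$ simply by exchanging the $j$-th and $k$-th entries while fixing the rest; feeding this into the gossip channel \eqref{alg:gossip} produces exactly the scalar update \eqref{eq:classical-gossipswap} for the $z_\ell$. Thus the vector $z(t)=(z_1(t),\dots,z_m(t))$ evolves, under any edge-selection strategy, precisely as the classical gossip state would evolve under the same strategy, and the corollary becomes a transcription of Proposition \ref{prop:classicalgossip}.

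First I would record the standing hypothesis (carried over from Theorem \ref{thm:convergence}) that the graph ${\cal G}$ associated to the quantum network is connected, which is exactly the connectivity assumption needed to invoke Proposition \ref{prop:classicalgossip}. Next I would treat the single-interaction strategies: for the cyclic schedule the $z_\ell$ follow the deterministic classical gossip iteration and converge exponentially, while for random edge selection with all $q_{j,k}>0$ they follow the random classical gossip and converge with probability one, in the probabilistic sense of Proposition \ref{prop:classicalgossip}. In either case I would identify the limit by noting that every swap, and hence every gossip step, leaves the sum $\sum_\ell z_\ell$ invariant: indeed $(1-\alpha)(z_j+z_k)+\alpha(z_k+z_j)=z_j+z_k$. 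Consequently the average $\tfrac{1}{m}\sum_k z_k(t)$ is conserved along every trajectory and equals $\tfrac{1}{m}\sum_k z_k(0)$; since the dynamics drive the $z_\ell$ to a common value, that value must be this conserved average, which also shows the limiting configuration is unique.

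Finally I would address the synchronous and expected-evolution cases, where the update is a convex combination over all edges rather than a single edge. Here the $z_\ell$ obey a classical \emph{synchronous} averaging iteration $z(t+1)=Wz(t)$ with $W$ symmetric and doubly stochastic on the connected graph ${\cal G}$, which converges to the same average; alternatively, since the $z_\ell(t)=\trace[\rho_t\sigma^{(\ell)}]$ are uniformly bounded and converge almost surely under random selection, their expectations converge by dominated convergence, the expected state evolving exactly by the synchronous map. The hard part is therefore not any new estimate but the bookkeeping of reconciling all the listed edge-selection strategies against the two cases explicitly covered by Proposition \ref{prop:classicalgossip}; once the reduction to the classical scalar recursion is in place, every strategy falls into one of these two templates.
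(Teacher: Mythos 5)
Your proposal is correct and follows essentially the same route as the paper: the paper derives the corollary precisely by observing that \eqref{eq:swapeffect} plugged into \eqref{alg:gossip} makes the tuple $(z_1,\dots,z_m)$ obey the classical gossip recursion \eqref{eq:classical-gossipswap}, and then invoking Proposition \ref{prop:classicalgossip}. Your extra care with the synchronous/expected-evolution strategies (reduction to a symmetric doubly stochastic averaging matrix) and with conservation of $\sum_\ell z_\ell$ only makes explicit what the paper compresses into ``Proposition \ref{prop:classicalgossip} readily implies.''
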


We remark that this only proves average \emph{$\sigma$-Expectation Consensus} of the quantum gossip algorithm, while our previous Theorem \ref{thm:convergence} shows that the algorithm in fact ensures the stronger average \emph{Symmetric State Consensus}.


\subsection{Gossip algorithm example}\label{ssec:GossipExample}

In this section we briefly discuss the evolution induced by random quantum gossip interactions \eqref{alg:gossip} on a four-qubit network whose associated graph is a path\footnote{I.e.~the available neighborhoods, labeling the subsystems as $\{1,2,3,4\},$ are $\{1,2\}$, $\{2,3\}$ and $\{3,4\}$.}. We observe its convergence toward average $\sigma$EC, average RSC and average SSC. In particular we consider as a ``target'' global observable: 
\begin{equation}
\label{ }
S=\frac{1}{4}\left( \sigma_z^{(1)}+ \sigma_z^{(2)}+ \sigma_z^{(3)}+ \sigma_z^{(4)}\right)\, .
\end{equation}
Let the initial state be:
\begin{equation}
\label{ }
\rho=| 1,0,1,0\rangle \langle1,0,1,0|,
\end{equation}
which is pure, and does not satisfy any of the consensus definitions provided in Section \ref{sec:Definition}. 

\noindent By Theorem $\ref{thm:convergence}$ we have that the state asymptotically converges to:
\begin{equation}
\label{c-cons}
 \begin{aligned}
\rho_\infty=&\lim_{t \longrightarrow \infty}\rho(t) 
= \frac{1}{3!} \, \sum_{\pi \in \mathfrak{P}} \, U_{\pi}\rho_0U_{\pi}^{\dagger} \\ 
= \frac{1}{6}&\;\;\;(| 1,1,0,0\rangle \langle1,1,0,0|+| 1,0,1,0\rangle \langle1,0,1,0|\\&+| 1,0,0,1\rangle \langle1,0,0,1|
 +| 0,1,1,0\rangle \langle0,1,1,0|\\&+| 0,1,0,1\rangle \langle0,1,0,1|+| 0,0,1,1\rangle \langle0,0,1,1|).
\end{aligned}
\end{equation}
This expression is clearly invariant under all the subsystem permutations, i.e. $\rho_\infty$ is in SSC, and therefore also in RSC and $\sigma$EC for all $\sigma$. The expectation value of $S$ is preserved at any step, and by Theorem \ref{thm:convergence} the algorithm drives the system to $S$-average consensus, with $\sigma =\sigma_z$.

However, $\rho_\infty$ is not in $\sigma$SMC for any $\sigma \neq \alpha I$. Indeed, according to Proposition \ref{propSYM}, $\rho_\infty$ is in $\sigma$SMC if and only if $\trace[\rho_\infty \Pi_{sym}]=1$. Now let $\{\Pi_i\}_{i=1}^{6}$ denote the orthonormal rank-one projectors in \eqref{c-cons} and define the orthonormal projector $\bar\Pi=\sum_i\Pi_i$, such that 
$\rho_\infty=\frac{1}{6}\sum_{i=1}^{6}\Pi_i=\frac{1}{6}\bar\Pi \, .$ We then get 
\begin{equation}
\label{ }
\trace[\rho_\infty \Pi_{sym}]=\frac{1}{6}\trace[\sum_{i=1}^{6}\Pi_i\Pi_{sym}]=\frac{1}{6}\trace[\Pi_{sym}\bar\Pi].
\end{equation}
This last expression is equal to $1$ if and only if $\trace[\Pi_{sym}\bar\Pi]=6.$ However, excluding the trivial case $\sigma = \alpha I$, for qubit networks $\Pi_{sym}$ is always a two dimensional projector, so $\trace[\Pi_{sym}\bar\Pi]\leq 2$. Hence $\rho_\infty$ cannot be in $\sigma$SMC for any non-trivial $\sigma$.

Figure \ref{4qubitRan1} shows the evolution of the expectation values of the local and of the global observables related to $\sigma_z$ as the iterations proceed for one run. The edges are selected at random with uniform probability, and the mixing parameter $\alpha$ is taken to be $1/2$. With this particular choice, the reduced density operators of two subsystems that have just interacted are equal; this explains why a maximum of three points are visible on the graph at any time.
The plot shows that asymptotically the expectation of the local observables $\sigma_z$ tend to the expectation value of the global observable $S$, while the expectation value of $S$ is preserved at each step.

\begin{figure}
\begin{center}
\includegraphics[width=9cm]{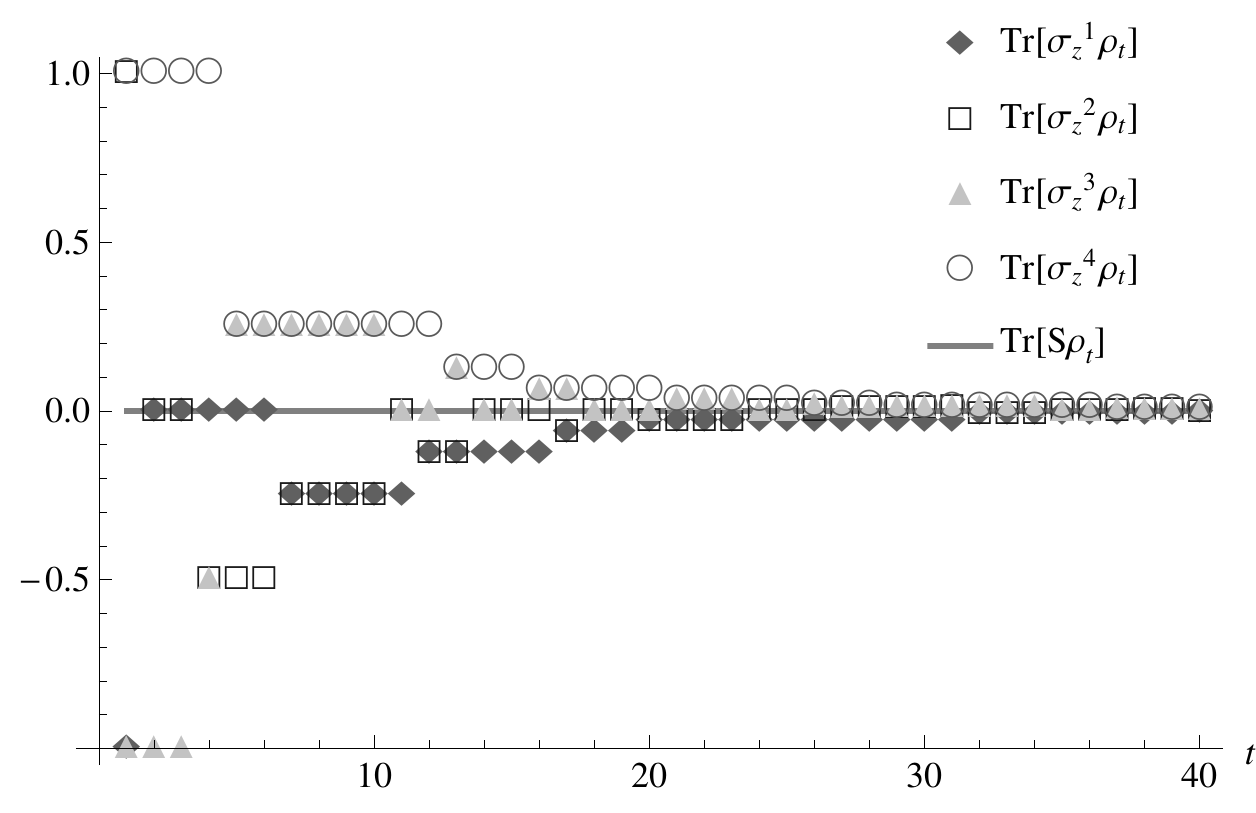}
\caption{Evolution toward $\sigma$-Expectation Consensus for a four-qubit network arranged in a path graph.}
\label{4qubitRan1}
\end{center}
\end{figure}


\section{Conclusions and Research Directions}\label{sec:conclusions}

In this paper we develop a general framework for posing and studying consensus problems in the quantum domain. In particular we provide various operationally-motivated generalizations of a ``consensus state'' to quantum systems -- namely $\sigma$-expectation consensus, reduced state consensus, symmetric state consensus, and single $\sigma$-measurement consensus -- and establish their hierarchy. We highlight at each step the symmetry considerations underlying the results, making explicit connection with the usual multi-agent consensus problem. The developments could be adapted in particular to obtain a ``consensus on probabilities'' framework for classical systems.
With respect to the existing work on non-commutative consensus \cite{SSRconsensus}, our approach follows the analogy with the classical setting as closely as possible, maintaining an operational viewpoint and working with a multipartite system (a quantum network). We propose and analyze a quantum gossip-type algorithm that asymptotically prepares symmetric-state consensus states while preserving the expectation of \emph{any permutation invariant observable}.

A number of questions remain open. Among these, we believe that it would be particularly interesting to further explore the link between single $\sigma$-measurement consensus states and entangled states, and to determine if, and under which conditions, it is possible to achieve this type of consensus with a distributed algorithm. This could potentially lead to a class of algorithms that prepare entangled states in a robust and distributed way. Another interesting point is to assess the potential of devising continuous-time quantum consensus algorithms. This could build on some sort of ``continuous swapping'' Hamiltonian dynamics and lead to connections with physically relevant many-body Hamiltonians and dynamics.

Lastly, let us remark that in this paper we proposed a quantum algorithm in which the gossip-type interactions are selected in a classical way. The potential advantage of a fully quantum implementation, along with its connection to quantum random walks and Markov chain mixing properties \cite{kempe-randomwalks,szegedy-randomwalks}, is definitely worth further investigation.
\vspace{3mm}


\appendix

\subsection{Description of quantum systems and notations}\label{app:notations}\label{miltipartite}

\subsubsection{Quantum systems basics}
This paper considers finite-dimensional quantum systems. Their mathematical description starts by considering a finite dimensional complex Hilbert space $\Hi\simeq \C^d.$ The (Dirac's) notation $\ket{\psi}$ denotes an element of $\Hi$ (called a {\em ket}), while $\bra{\psi}= \q{\psi}^\dag$ is used for its dual (a {\em bra}), and $\braket{\psi}{\varphi}$ for the associated inner product. We denote the set of linear operators on $\Hi$ by $\mathfrak{B}(\Hi)$.  The adjoint operator $X^\dag \in \mathfrak{B}(\Hi)$ of an operator $X \in \mathfrak{B}(\Hi)$ is the unique operator that satisfies $(X \q{\psi})^\dag \, \q{\chi} = \qd{\psi} \, (X^\dag \q{\chi})$ for all $\q{\psi},\q{\chi} \in \Hi$. We then denote $\mathfrak{H}(\Hi)$ the subset of $\mathfrak{B}(\Hi)$ of self-adjoint operators, and $\mathfrak{U}(\Hi) \subset \mathfrak{B}(\Hi)$ the subset of unitary operators. The natural inner product in $\mathfrak{B}(\Hi )$ is the Hilbert-Schmidt product $\langle X,Y \rangle=\trace(X^\dag Y),$ where $\trace$ is the usual trace functional (which is canonically defined in a finite dimensional setting). We denote by $I$ the identity operator.
Working in a finite dimensional setting, we often consider vectors and operators as represented by complex matrices of suitable dimensions: $\ket{\psi}\in\Hi\simeq\C^d$ are represented by column vectors, so $\bra{\phi}\in\Hi^\dag\simeq\C^d$ are row vectors; $X\in\mathfrak{B}(\Hi )\simeq\C^{d\times d}$ are $d\times d$ complex matrices, the adjoint $X^\dag$ is the transpose conjugate of $X$, self-adjoint and unitary properties carry over to the associated matrices.

In statistical quantum theory, the state of a quantum system is represented by a {\em density operator} $\rho$, that is any self-adjoint, positive semi-definite operator with trace one. We denote the convex set of these operators (the state space) by $\mathfrak{D}(\Hi )$. The extreme points of this set, namely the rank-one operators $\rho = \q{\psi}\qd{\psi}$ with $\q{\psi} \in \Hi$ and $\qd{\psi} \psi \rangle = 1$, are called \emph{pure states}.

A {\em projective} (or von Neumann)  observation, or measurement, of a quantum system is characterized by a so-called {\em observable}, that is a self-adjoint operator $\sigma \in \mathfrak{H}(\Hi )$, see e.g.~\cite{sakurai}. Its spectral decomposition $\sigma=\sum_j s_j \Pi_j$ determines the possible outcomes $\{s_j\}$ of the measurement, and the projectors $\Pi_j$ determine the associated state of the quantum system after the stochastic measurement: having state $\rho$ before the measurement, the latter's outcome will be $s_j$ with probability $\PP_{\rho}(\Pi_j) = \trace(\Pi_j\rho) =: p_j$; and if the $j$-th outcome is measured, then the state after the measurement is $\rho|_j= \Pi_j\rho\Pi_j \, / \, p_j$. The probability to observe $s'_k$ in a subsequent measurement of $\sigma' = \sum_k s'_k \Pi'_k$, where the $\Pi'_k$ do not necessarily commute with the $\Pi_j$, is then:
\[\mathbb{P}_{\rho|_j}(\Pi'_k)=\trace(\Pi'_k\Pi_j\rho\Pi_j)/ \, p_j.\]
From this it follows that the probability of observing the {\em ordered} sequence of two events first $s_j$, then $s'_k$, given the initial $\rho$, is
\[\mathbb{P}_{\rho}(\Pi_j,\Pi'_k)=\trace(\Pi'_k\Pi_j\rho\Pi_j).\]
If $\Pi_j$ and $\Pi'_k$ do not commute, a different ordering in a sequence of measurements can change the resulting probability. If $\Pi_j$ and $\Pi'_k$ do commute, and only then, the joint probability of observing $s_j,s_k$ is independent of the measurement order for all $\rho$, and simplifies to \[\mathbb{P}_{\rho}(\Pi_j,\Pi'_k)=\trace(\Pi'_k\Pi_j\rho).\]

\subsubsection{Multipartite systems and partial trace} For simplicity, we present the interaction of two quantum systems; the case of $n>2$ systems is easily obtained by iteration. If two quantum systems, with associated Hilbert spaces $\Hi_1$ and $\Hi_2$ respectively, are taken together to form a larger bipartite quantum system, the Hilbert space $\Hi_{1,2}$ associated to the composite quantum system is the tensor product of the individual quantum subsystem Hilbert spaces, $\Hi_1\otimes \Hi_2$. 

Let $\{| \psi_k\rangle\}_{k=1}^{d_1}$ and $\{|\phi_l \rangle\}_{l=1}^{d_2}$ be orthonormal bases for $\Hi_1$ and $\Hi_2$ respectively, then an orthonormal basis for $\Hi_{1,2}$ can be written as:
\begin{equation}
	\{|\psi_k\rangle \otimes |\phi_l \rangle\}_{k,l=1}^{d_1,d_2} \; ,
\end{equation}
from which we get that $\dim(\Hi_{1,2})=\dim(\Hi_1)\dim(\Hi_2)=d_1d_2$. We use the short notation $|\psi, \phi \rangle:=|\psi\rangle \otimes |\phi \rangle$ for any $\q{\psi} \in \Hi_1$ and $\q{\phi} \in \Hi_2$. The composite Hilbert space is naturally endowed with the inner-product
$
\langle u_1,u_2 |v_1,v_2 \rangle:= \langle u_1 |v_1 \rangle \langle u_2 |v_2 \rangle \, .   
$
A representation and basis for operators in $\mathfrak{B}(\Hi_{1,2})$ is derived from its vector counterpart in the standard way.
In particular, given two operators $X_1 \in \mathfrak{B}(\Hi_1)$ and $X_2 \in \mathfrak{B}(\Hi_2)$, one can define $X_1\otimes X_2\in \mathfrak{B}(\Hi_{1,2})$ as the linear operator such that $\forall \q{u_1} \in \Hi_1\, , \; \q{u_2} \in \Hi_2$:
\begin{equation}
\label{ }
X_1\otimes X_2(|u_1\rangle \otimes |u_2 \rangle)=X_1| u_1\rangle \otimes X_2|u_2 \rangle \;\;\;  .
\end{equation}
If two operators are in the form $X_1\otimes\mathcal{I}_2$ and $ \mathcal{I}_1\otimes X_2$, i.e.~they act non-trivially only on different parts of the multipartite system, then they commute for any $X_1$ and $X_2$. It is worth noting that in matrix representation, the tensor product corresponds to the Kronecker product. 

The partial trace over $\Hi_1$ is a linear map:
\begin{equation}
\label{ }
\trace_{\Hi_1}:\mathcal{B}(\Hi_1\otimes \Hi_2)\longrightarrow \mathcal{B}(\Hi_2),
\end{equation}
such that, for any $X_{1,2}\in \mathcal{B}(\Hi_{1,2})$ and any $X_2 \in \mathcal{B}(\Hi_2)$, it holds that:
\begin{equation}
\label{ }
\trace[\trace_{\Hi_1}[X_{1,2}]X_2]=\trace[X_{1,2}(\mathcal{I}_1\otimes X_2)].
\end{equation}
If now $\{| \psi_k\rangle\}_{k=1}^{d_1}$ and $\{|\phi_l \rangle\}_{l=1}^{d_2}$ are orthonormal bases for $\Hi_1$ and $\Hi_2$ respectively, the partial trace over $\Hi_1$ can be written as:
\begin{equation}
\label{ }
\trace_{\Hi_1}[X_{1,2}]=\sum_{k,l,i}\langle \psi_k \otimes \phi_l | X_{1,2}|\psi_k \otimes \phi_i \rangle |\phi_l\rangle \langle \phi_i|.
\end{equation}
The partial trace over $\Hi_2$ writes in a similar fashion.

The tensor product structure together with the superposition principle enrich quantum theory with the phenomenon of {\em entanglement}. Namely, if some vectors in $|\xi \rangle \in \Hi_{1,2}$ -- like the above-defined basis vectors -- can be {\em factorized} as 
\begin{equation}
\label{eq:entangled}
|\xi\rangle=|\psi\rangle \otimes |\phi\rangle
\end{equation}
for some $|\psi\rangle \in \Hi_1$ and $|\phi \rangle \in \Hi_2$, there exist many vectors in $\Hi_{1,2}$ which cannot be written like \eqref{eq:entangled}; these are called {\em entangled}. 

In the density operator formalism, a state $\rho\in \mathfrak{D}(\Hi_{1,2})$ is called entangled if it cannot be written as a convex combination of factorized operators, i.e.:
\begin{equation}
\label{ }
\rho\neq \sum_{i} \, p_i \, \rho^1_i\otimes \rho^2_i\,\,\,\,\,\,\,\,\,\mbox{with}\,\,\,\,\,\,\,\rho^1_i \in \mathfrak{D}(\Hi_1)\,\,\,\,\mbox{and}\,\,\,\,\rho^2_i \in \mathfrak{D}(\Hi_2) \, .
\end{equation}
Intuitively, this means that an entangled state contains some specifically quantum correlation so that it cannot be separated into subsystem states (conditioned by classical probability correlations). The only individual characterization of a subsystem that we can give is its ``expected'' density operator if information about all other subsystems is ignored. This reduced density operator e.g.~for subsystem 1 is obtained by taking the partial trace over $\Hi_2$ of the overall state $\rho$:
$$\bar{\rho}_1 = \trace_{\Hi_2}[\rho] \, .$$

\subsection{Distinguishing SSC from RSC by measurements}\label{app2}
Accumulating statistics about measurement outcomes at each subsystem separately, allows in principle to detect a pure state for which SSC and RSC are equivalent (see Proposition \ref{prop-meas}). Indeed, the knowledge of all local measurement statistics is equivalent to knowing the reduced state.
\begin{prop}
Except for the case of Proposition \ref{prop-meas}, SSC can only be distinguished from RSC by inspecting correlations between measurement outcomes at different subsystems.
\end{prop}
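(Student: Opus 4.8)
The plan is to make precise the two observations stated in the paragraph that precedes the statement. First I would record that the statistics of any measurement performed on a single subsystem $k$ depend on the global state only through the reduced state $\bar\rho_k$: for any local projector $\Pi$ one has $\trace(\Pi^{(k)}\rho)=\trace(\Pi\,\bar\rho_k)$, and by linearity the full record of single-subsystem outcome statistics is informationally equivalent to the collection $(\bar\rho_1,\ldots,\bar\rho_m)$. Consequently, to ``distinguish SSC from RSC without inspecting correlations'' means precisely to decide the SSC property from the reduced states alone. Since the comparison is between SSC and RSC states, all reduced states coincide, $\bar\rho:=\bar\rho_1=\ldots=\bar\rho_m$, and the proposition reduces to the claim that the value of $\bar\rho$ determines whether $\rho$ is SSC exactly when $\bar\rho$ is pure.

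I would first dispose of the pure case, which is the stated exception. If $\bar\rho=\q{\psi}\qd{\psi}$ is pure, then the argument of Proposition~\ref{prop-meas} shows that the \emph{only} global state with every reduced state equal to $\bar\rho$ is the product $\q{\psi}\qd{\psi}^{\otimes m}$, which is manifestly permutation invariant. Hence in this case the reduced states already certify SSC, and local statistics suffice; no correlations are needed.

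The core of the proof is the converse: when $\bar\rho$ is \emph{not} pure I would exhibit two distinct global states sharing the common reduced state $\bar\rho$, one SSC and the other RSC but not SSC. Since they yield identical single-subsystem statistics, no local measurement protocol can separate them, so only correlations between subsystems can detect SSC. For the symmetric witness I take $\sigma_{+}=\bar\rho^{\otimes m}$, which is SSC with every reduced state equal to $\bar\rho$. For the asymmetric witness I use that $\rank(\bar\rho)\geq 2$: picking two orthonormal eigenvectors $\q{e_1},\q{e_2}$ of $\bar\rho$ with nonzero eigenvalue, set the two linearly independent, traceless, self-adjoint operators $F=\q{e_1}\qd{e_1}-\q{e_2}\qd{e_2}$ and $G=\q{e_1}\qd{e_2}+\q{e_2}\qd{e_1}$, supported on $\supp(\bar\rho)$, and let $\Delta=F\otimes G-G\otimes F$. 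This $\Delta$ is a nonzero self-adjoint operator with $\trace_1\Delta=\trace_2\Delta=0$ (because $F,G$ are traceless) and $U_{(1,2)}\,\Delta\,U_{(1,2)}^\dagger=-\Delta$. The second witness is then
\[
\sigma_{-}=\left(\bar\rho^{\otimes 2}+\varepsilon\,\Delta\right)\otimes\bar\rho^{\otimes(m-2)} .
\]
Because $\Delta$ has vanishing partial traces, every reduced state of $\sigma_{-}$ equals $\bar\rho$, so $\sigma_{-}$ is RSC; but $U_{(1,2)}\,\sigma_{-}\,U_{(1,2)}^\dagger=(\bar\rho^{\otimes 2}-\varepsilon\Delta)\otimes\bar\rho^{\otimes(m-2)}\neq\sigma_{-}$, so $\sigma_{-}$ is not SSC, while it shares its reduced states with the SSC state $\sigma_{+}$.

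I expect the only delicate point to be the positivity of $\sigma_{-}$, and this is exactly where non-purity is used. On $\supp(\bar\rho)\otimes\supp(\bar\rho)$ the operator $\bar\rho^{\otimes 2}$ is bounded below by the square of the least nonzero eigenvalue of $\bar\rho$, which is strictly positive precisely because $\rank(\bar\rho)\geq 2$; since $\Delta$ is supported there, a sufficiently small $\varepsilon$ keeps $\bar\rho^{\otimes 2}+\varepsilon\Delta$ positive semidefinite (it vanishes on the orthogonal complement), and tensoring with $\bar\rho^{\otimes(m-2)}\geq 0$ preserves positivity. A pure $\bar\rho$ leaves no such room, which is why the construction fails there and the exception of Proposition~\ref{prop-meas} is genuine. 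The remaining verifications---equivalence of local statistics with reduced states, tracelessness and antisymmetry of $\Delta$, and invariance of the reduced states---are routine.
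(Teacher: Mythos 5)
Your proof is correct, and it follows the same skeleton as the paper's: (i) single-subsystem statistics are informationally equivalent to the reduced states, (ii) the pure case is settled by Proposition \ref{prop-meas}, and (iii) in the mixed case one exhibits two global states with identical reduced states, one SSC and one RSC-but-not-SSC, so that local data cannot decide. Where you genuinely differ is in the construction of the non-SSC witness. The paper picks two distinct orthonormal bases $\q{e_1},\q{e_2}$ and $\q{f_1},\q{f_2}$ of a two-dimensional subspace of $\supp(\bar\rho)$ and builds a state whose first two subsystems carry the entangled vector $\q{e_1}\q{f_1}+\q{e_2}\q{f_2}$; this emphasizes that the locally invisible discrepancy can be \emph{entanglement}, which ties into the paper's surrounding discussion, but as written it leaves the normalization and the matching of reduced states to the reader. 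Your witness $\sigma_-=\bigl(\bar\rho^{\otimes 2}+\varepsilon(F\otimes G-G\otimes F)\bigr)\otimes\bar\rho^{\otimes(m-2)}$ is instead a small perturbation of the symmetric product state by a swap-antisymmetric operator with vanishing partial traces: this gives explicit, easily checked control of positivity (small $\varepsilon$), makes the matching of all reduced states a one-line computation, and shows that the ambiguity does not even require entanglement. It also makes the role of the exception transparent: a one-dimensional $\supp(\bar\rho)$ admits no nonzero traceless self-adjoint perturbation, so the construction collapses exactly in the pure case. One small wording slip: the least nonzero eigenvalue of $\bar\rho$ is strictly positive by definition, not ``because $\rank(\bar\rho)\geq 2$''; what rank two buys you is the existence of the operators $F,G$ (equivalently of a nonzero $\Delta$ supported in $\supp(\bar\rho)\otimes\supp(\bar\rho)$), not the positivity margin. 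This does not affect the validity of the argument.
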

\proof
The statement builds on the standard fact that the statistics of a local observable $\sigma_1 \otimes \sigma_2 \otimes ... \otimes \sigma_m$ only depend on reduced states $\bar\rho_1,\bar\rho_2,...,\bar\rho_m$. So repeated local measurements can, at their best, fully characterize the $\bar\rho_k$. Checking RSC, i.e.~that these $\bar\rho_k$ are all equal, is thus straightforward. On the other hand, reduced states $\bar\rho_k$ are the best that can be extracted by local measurements in trying to distinguish RSC from SSC states. If $\bar{\rho}_1 = \bar{\rho}_2 = ... =: \bar{\rho}$ have rank one, we have the special case that is always SSC. If instead $\bar{\rho}$ has rank at least 2, we can write it as $\bar{\rho} = p_1 R_1 + p_2 R_2$ where $R_1,\, R_2  \in{\mathfrak B}(\Hi)$, $p_1,\,p_2$ are positive scalars, $R_2$ is positive semidefinite, and $R_1$ is a projector on a 2-dimensional subspace $\mathcal{V}_2$. Consider $R_1 = \q{e_1}\qd{e_1} + \q{e_2}\qd{e_2} = \q{f_1}\qd{f_1} + \q{f_2}\qd{f_2}$, where $\q{e_1},\q{e_2}$ and $\q{f_1},\q{f_2}$ are two orthonormal bases for $\mathcal{V}_2$ with $\langle e_1 \vert f_1 \rangle \notin \{0,1\}$. Now the reconstructed $\bar{\rho}$ could equally well reflect the state 
$$\rho = \bar{\rho} ^{\otimes m}\, ,$$
which is SSC, or e.g.~a state of the form:
\begin{align*}
	&\rho  =  p_2 \, R_2 ^{\otimes m} \\& +\;p_1 (\q{e_1}\q{f_1}+\q{e_2}\q{f_2})(\q{e_1}\q{f_1}+\q{e_2}\q{f_2})^\dag \otimes R_1 ^{\otimes (m-2)} ,
\end{align*}
where the first two subsystems are entangled. This state is not SSC, even for $m=2$.
Thus the local knowledge of $\bar{\rho}$ does not allow to distinguish if the state is SSC or not.
\qed

\subsection{Proof of Proposition \ref{prop:Nogo}}\label{nosmc}

The definition of $\sigma$SMC involves $\trace(\Pi_j^{(k)}\Pi_j^{(\ell)} \rho)$, which takes the partial trace over the state of all subsystems except the pair $\{k,\ell\}$. So we can effectively discard all but two subsystems, and show without loss of generality that it is impossible to make $\sigma$SMC hold for all $\sigma$ on two subsystems $k=1,\, \ell=2$.
In Proposition \ref{propSYM}, we say that $\sigma$SMC for a particular $\sigma$ requires $\Pi_{\rm sym} \rho = \rho$ with $\Pi_{\rm sym}=\sum_j \Pi_j^{\;\otimes m},$ and $\{\Pi_j\}$ the spectral projectors associated to $\sigma$. So if $\sigma$SMC has to hold for both $\sigma$ and $\sigma'$, we must have in particular
$$\Pi_{\rm sym} \Pi_{\rm sym}' \Pi_{\rm sym} \rho = \rho \, ,$$
where $\Pi_{\rm sym}'$ is associated to $\sigma'$. Since $H:= \Pi_{\rm sym}\Pi_{\rm sym}' \Pi_{\rm sym}$ and $\rho$ both are self-adjoint positive semidefinite, the only way to have $H \rho = \rho \neq 0$ is if $H$ has at least one eigenvalue $\geq 1$. Now take in particular $\sigma = \sum_k \; k\,\q{x_k}\qd{x_k}$ and $\sigma' = \sum_k \; k\, \q{p_k}\qd{p_k}$, with $p_k = \frac{1}{\sqrt{n}} \, \sum_{j=0}^{n-1}\, e^{j k 2\pi i/n} \q{x_j}$ (thus the $\q{p_k}$-basis is related to the $\q{x_k}$-basis by Fourier transform). A few computations show that $H$ then has all eigenvalues $<1$, except for $n=2$ that is the case of two qbits. For the latter particular case, one can prove the property by showing e.g.~that there is no state which would satisfy $\sigma$SMC for all $\sigma \in \{\sigma_x,\sigma_y,\sigma_z\}$.
\qed

\bibliographystyle{IEEEtran}
\bibliography{bibQcons2}

\end{document}